\newtheorem{theorem}{Theorem}
\newtheorem{definition}{Definition}
\newtheorem{lemma}{Lemma}
\newtheorem{corollary}{Corollary}
\newtheorem{assumption}{Assumption}
\newtheorem{result}{Result}
\newcommand{\REMOVED}[1]{}
\newcommand{\eq}[1]{Eq.~\eqref{#1}}
\newcommand{\Np}{N_{p}^{(\mathcal{M})}}
\newcommand{\Na}{N_{a}^{(\mathcal{M})}}
\newcommand{\Cp}{\mathcal{C}_{p}^{(\mathcal{M})}}
\newcommand{\Ca}{\mathcal{C}_{a}^{(\mathcal{M})}}
\newcommand{\XM}{{X}_{\mathcal{M}}}
\newcommand{\revisionRed}[1]{{#1}}
\newcommand{\red}[1]{{#1}}
\newcommand{\redout}{\bgroup\markoverwith{\textcolor{red}{\rule[.3ex]{2pt}{1pt}}}\ULon}
\begin{document}
\title{Effects of Content Popularity on the Performance of Content-Centric Opportunistic Networking:\\An Analytical Approach and Applications}
%
%
%

\author{Pavlos~Sermpezis, 
        Thrasyvoulos Spyropoulos,~\IEEEmembership{Member,~IEEE}
\thanks{P. Sermpezis and T. Spyropoulos are with the Department
of Mobile Communications, EURECOM, France, e-mail: firstname.lastname@eurecom.fr.}
\thanks{}
\thanks{A preliminary version of this paper appeared in Proc. 15th ACM International Symposium on Mobile Ad Hoc Networking and Computing (MobiHoc), 2014~\cite{pavlos-not-all-content}.}}

%

\maketitle

\begin{abstract}
Mobile users are envisioned to exploit direct communication opportunities between their portable devices, in order to enrich the set of services they can access through cellular or WiFi networks. Sharing contents of common interest or providing access to resources or services between peers can enhance a mobile node's capabilities, offload the cellular network, and disseminate information to nodes without Internet access. Interest patterns, i.e. how many nodes are interested in each content or service (popularity), as well as how many users can provide a content or service (availability) impact the performance and feasibility of envisioned applications. In this paper, we establish an analytical framework to study the effects of these factors on the delay and success probability of a content/service access request through opportunistic communication. We also apply our framework to the mobile data offloading problem and provide insights for the optimization of its performance. We validate our model and results through realistic simulations, using datasets of real opportunistic networks.
\end{abstract}

\begin{IEEEkeywords}
Performance analysis; Opportunistic networks; Content popularity; Mobile data offloading
\end{IEEEkeywords}


%
\IEEEpeerreviewmaketitle

\section{Introduction}

\IEEEPARstart{O}{pportunistic} or Delay Tolerant Networks (DTNs) consist of mobile devices (e.g. smartphones, laptops) that can exchange data using direct communication (e.g. Bluetooth, WiFi Direct) when they are within transmission range.  While initially proposed for communication in extreme environments, the proliferation of ``smart'' mobile devices has led researchers to consider opportunistic networks as a way to support existing infrastructure and/or novel applications, like file sharing~\cite{Gao-user-centric-DTN, Yoneki-publish-subscribe-dtn}, crowd sensing~\cite{MobComp-next-decade,Ott-oppnet-applications}, collaborative computing~\cite{opp-computing,Scampi-OppComp}, offloading of cellular networks~\cite{offloading-wowmom11,Hui-Offloading, multiple-offloading}, etc.

This trend is also shifting the focus from end-to-end to \textit{content-centric} communications. \red{In a content-centric application some nodes $\geq1$ (the \textit{holders}) have the \textit{same} content item (e.g. a data file, a service), and some nodes $\geq1$ (the \textit{requesters}) are interested in this content. The goal of the communication mechanism is the requesters to get the content from the holders}. Some content-centric applications for which opportunistic networking has been considered are: (i) \textit{content sharing}~\cite{Gao-user-centric-DTN,contentplace,CEDO}: the source(s) of a content (e.g. multimedia file, web page) might want to distribute it (e.g. user generated content) or is willing to share it with other nodes (e.g. content downloaded earlier); (ii) \textit{service or resource access}~\cite{opp-computing,Scampi-OppComp}: nodes offer access to resources (e.g. Internet access) or services (e.g. computing resources); (iii) \textit{mobile data offloading}~\cite{offloading-wowmom11,Hui-Offloading,multiple-offloading}: the cellular network provider, instead of serving separately each node requesting a given content (e.g. a popular video, or software update), distributes a few copies of the content in some relay nodes (\textit{holders}) and they can further forward it to any other node that makes a request for it. 

The performance of these mechanisms highly depends on \textit{who} is interested, in \textit{what}, and \textit{where} it can be found (i.e. which other nodes have it). While the effect of node mobility has been extensively considered (e.g.~\cite{Gao-user-centric-DTN, contentplace, Picu:wowmom2012}) content popularity has been mainly considered from an algorithmic perspective (e.g~\cite{multiple-offloading,CEDO}), and in the context of a specific application. Despite the inherent interest of these studies, some questions remain:  Would a given allocation policy work well in a different network setting? Are there interest patterns that would make a scheme generally better than others? Key factors like content popularity and content availability might impact the performance or even decide the feasibility of a given application altogether. In this paper, we try to provide some initial insight into these questions, by contributing along the following key directions: 

\revisionRed{
\textbf{Content popularity model.} We propose an analytical framework that is applicable to a range of mobility and content popularity patterns seen in real networks (Section~\ref{sec:network-model}). Its \textit{simplicity} and \textit{generality} can render it a useful tool for future modeling/analytic studies. To our best knowledge, this is the first application-independent effort in this direction.
}

\revisionRed{\textbf{Performance analysis.} We derive closed form expressions for the prediction of important performance metrics (Section~\ref{sec:analysis}). We first derive exact predictions and bounds for the performance of content delivery in a base scenario, and then extend our analysis to more generic mobility and traffic cases.}

\revisionRed{The practicality of our results lies in the fact that only a few statistics about the aggregate mobility and content popularity patterns is needed. Hence, they facilitate \textit{online} performance prediction and protocol tuning, compared to approaches (as e.g.~\cite{multiple-offloading}) requiring detailed per node statistics that are hard to acquire in real scenarios. Moreover, they can complement system design or feasibility studies. The presented simulations, which validate the accuracy of the theoretical predictions, indicate how a sensitivity analysis of system parameters and a comparison of different mechanisms can be performed.}

\revisionRed{
\textbf{Mobile data offloading optimization.} While a detailed application-specific optimization is beyond the scope of this paper, we demonstrate how our framework can be applied to an example application: mobile data offloading  (Section~\ref{sec:applications}). Using our analysis, we show how an offloading mechanism can be optimized and discuss what the performance related implications are. This case study provides guidelines to researchers for investigating and similarly proceeding in the analysis and optimization of further content-centric applications, policies, protocols, etc.
}

\revisionRed{
Finally, we discuss related work in Section~\ref{sec:related-work}, and conclude our paper in Section~\ref{sec:conclusion}.
}

\section{Network Model}\label{sec:network-model}
\subsection{Mobility Model}\label{sec:mobility-model}
We consider a network $\mathcal{N}$, where $N$ nodes move in an area, much larger than their transmission range. Data packet exchanges between a pair of nodes can take place only when they are in proximity (\textit{in contact}). Hence, the time points, when the contact events take place, and the nodes involved, determine the dissemination of a message.

We assume that the sequence of the contact events between nodes $i$ and $j$ is given by a random point process with rate $\lambda_{ij}$\footnote{We ignore the contact duration and assume infinite bandwidth; assumptions that are common (e.g.~\cite{Gao-user-centric-DTN,multiple-offloading}) and orthogonal to the problem we consider here.}. Analyses of real-world traces suggest that the times between consecutive contacts for a given pair can often be approximated (completely or in the tail) as either exponentially~\cite{Gao2009,Conan2007} or power-law (e.g. Pareto) distributed~\cite{chaintreau-tmc}. \red{Our analysis can be applied to both cases, as well as for other distribution types. In the remainder, we focus on the exponential inter-contact times case, which can be described with a single main parameter $\lambda_{ij}$ (the \textit{contact rate}), and we further demonstrate its applicability to a simple Pareto inter-contact times case.}

The network $\mathcal{N}$ can be described with the contact (or meeting) rates matrix $\mathbf{\Lambda} = \lbrace\lambda_{ij}\rbrace$. Depending on the underlying mobility process, there might be large differences between the different $\lambda_{ij}$ values in this matrix. Furthermore, it is often quite difficult, in a DTN context, to know $\mathbf{\Lambda}$ exactly, or estimates might be rather noisy. For these reasons, we consider the following simple model for $\mathbf{\Lambda}$:

\begin{assumption}\label{ass:heterogeneous-mobility}
 The contact rates $\lambda_{ij}$ are drawn from an arbitrary distribution with probability density function $f_{\lambda}(\lambda)$ with known mean $\mu_{\lambda}$ and variance $\sigma^{2}_{\lambda}$ ($CV_{\lambda} = \frac{\sigma_{\lambda}}{\mu_{\lambda}}$).
\end{assumption}

By choosing the right function $f_{\lambda}$ the above model can capture heterogeneity in the pairwise contact rates, or noise in the estimates. In practice, one would fit the empirical distribution observed in a given measurement trace with an $\hat{f}_{\lambda}$ and use it in the analysis\footnote{\red{In some scenarios node mobility might have some further, more complex characteristics, e.g. node pairs contact with different frequency $\lambda_{ij}$ during day/night or weekdays/weekends. However, in the majority of applications, it can be safely assumed that there is a \textit{time-scale separation}, i.e., the time to deliver a content is much smaller than a period of similar $\lambda_{ij}$ values. Moreover, a real system can use different rates depending on the considered period, e.g. $\lambda_{ij}^{(day)}$ and $\lambda_{ij}^{(night)}$, or a running estimate mechanism.}}.

\revisionRed{Summarizing, the above model is a trade-off between \textit{realism}, \textit{analytical tractability}, and \textit{usefulness}. Our choice, among the several options of the aforementioned trade-off, is motivated as follows. The above model for  $\mathbf{\Lambda} = \lbrace\lambda_{ij}\rbrace$ can capture many aspects of the contact rates' heterogeneity (i.e., different pair-wise rates $\lambda_{ij}$ and distributions $f_{\lambda}(\lambda)$). At the same time, it is a probabilistic model and, thus, it remains simple enough to derive insightful, closed-form results for the performance of a content delivery (Section~\ref{sec:analysis}), which is the main goal of this paper. Finally, if more detailed mobility characteristics (e.g., temporal or periodic patterns~\cite{transient-contact-patterns}) were needed to predict performance, this would make our results less useful for designing a system/application, since in a real scenario it is not always possible (or practical) to acquire all this information (or, at least, in real time).}

\subsection{Content Traffic Model}\label{sec:content-traffic-model}

We assume that each node might be \textit{interested in} one or more ``contents''. A content of interest might refer to  (i) a single piece of data (e.g. a multimedia file, a google map)~\cite{offloading-wowmom11}, (ii) all messages/data belonging to a category of interests (e.g. local events, financial news)~\cite{Yoneki-publish-subscribe-dtn,Costa-publish-subscribe-dtn}, (iii) updates and feeds (e.g. weather forecast, latest news)~\cite{podcasting}, etc. 

A number of content-sharing applications and mechanisms have been proposed in previous literature, from publish-subscribe mechanisms to ``channel''-based sharing and device-to-device offloading, etc., (e.g.~\cite{Yoneki-publish-subscribe-dtn,MobComp-next-decade,Ott-oppnet-applications,podcasting}). To proceed with our analysis we need to setup a model of content/service access. \red{In the following, we propose a generic model for content-centric applications.}

The main notation we use in our model and analysis is summarized in Table~\ref{table:notation}.

\subsection*{\textbf{Content Popularity}}

We assume that when a node is interested in a content or service, it queries other nodes it \textit{directly} encounters for it.  We denote the event that a node $i\in\mathcal{N}$ is \textit{interested in} a content $\mathcal{M}$ (or, equivalently, $i$ requests $\mathcal{M}$) as: $i\rightarrow \mathcal{M}.$ We further denote the set of all the contents that nodes are interested in, as: $\textbf{M} = \lbrace \mathcal{M}: \exists i\in\mathcal{N}, i\rightarrow\mathcal{M} \rbrace$. $| \textbf{M} | = M$, where $|\cdot |$ denotes the cardinality of a set.
\begin{definition}[Content Popularity]\label{def:content-popularity}
We define the popularity of a content $\mathcal{M}$ as the number of nodes $\Np$ that are interested in it\footnote{This could be an average, calculated over some time window.}: 
\begin{equation}\label{eq:popularity-definition}
\Np =  |\mathcal{C}_{p}^{(\mathcal{M})}|, \text{ where } \mathcal{C}_{p}^{(\mathcal{M})} = \lbrace i\in \mathcal{N}: i\rightarrow\mathcal{M}\rbrace 
\end{equation}
We further denote the percentage of contents with a given popularity value $n$ as 
\begin{equation}\label{eq:popularity-distribution}
 P_{p}(n) = \frac{1}{M}\sum_{\mathcal{M}\in\textbf{M}}\mathcal{I}_{\Np=n},~~~ n\in[0,N]
\end{equation}
where $\mathcal{I}_{\Np=n}=1$ when $\Np=n$ and $0$ otherwise. 
\end{definition}

In other words, $P_{p}(n)$ defines a probability distribution over the different contents and associated popularities: \red{If we randomly choose one content $\mathcal{M}\in\textbf{M}$, then the probability that its popularity is equal to $n$ is given by $P_{p}(n)$.}

\red{In practice, the content popularity distribution $P_{p}(n)$ might be known exactly, estimated, or predicted, depending on the given scenario and application. For instance, in a publish-subscribe application, users subscribe in advance in different channels, and thus the popularity of each channel/content can be known or estimated through distributed mechanisms. In a mobile data offloading scenario, the cellular network might be informed from users about their requests, or infer popularity from their interest profiles~\cite{multiple-offloading}. In content sharing application, the popularity of a file can be predicted using methods based, e.g., on past statistics,
early demand of a content, social dynamics, etc.~\cite{thesis-popularity-prediction}.}

\subsection*{\textbf{Content Availability}}
We assume that a request for a content or service is completed, when (and if) a node that holds (a copy of) the requested content is \textit{directly} encountered. We denote the event that a node $i$ holds (a copy of) a content $\mathcal{M}$ as $i \leftarrow\mathcal{M}$, 
and we define the availability $\Na$ of a content $\mathcal{M}$ as
\begin{definition}[Content Availability]\label{def:content-avalability}
The availability of a content message $\mathcal{M}$ is defined as the number of nodes $N_{a}^{(\mathcal{M})}$ that hold a copy of it. 
\begin{equation}
\Na =  |\mathcal{C}_{a}^{(\mathcal{M})}|, \text{ where } \mathcal{C}_{a}^{(\mathcal{M})} = \lbrace i\in \mathcal{N}: i\leftarrow\mathcal{M}\rbrace
\end{equation}
\end{definition}

The availability of a given content might often (although not always) be correlated with the popularity of that content. A cellular network provider, for example, might \textit{allocate} more holders for popular contents~\cite{multiple-offloading}. In a content-sharing setting, where some nodes might be more willing than others to maintain and share (``seed'') a content after they have downloaded and ``consumed'' it, popular content will end up being shared by more nodes. We will model such correlations in a \textit{probabilistic} way, as follows.
\begin{definition}[Availability vs. Popularity]\label{def:avail-popul-relation-generic}
The availability of a content item is related to its popularity through the relation
\begin{equation}\label{eq:g_mn}
\red{P\lbrace N_{a} = m | N_{p} = n \rbrace = g(m|n)}
\end{equation}
\end{definition}
The above conditional probabilities can describe a wide range of cases where availability depends on popularity, and some additional randomness might be present due to factors like: natural churn in the nodes sharing the content, content-dependent differences in the sharing policies applied by nodes, estimation noise, etc. \red{For example, we might assume that a content of higher popularity has \emph{on average} higher availability, but the actual availability (e.g. over a given time window) is subject to some randomness due to node churn, etc.} 

Some special cases of this model include: 

(i) \textit{Uncorrelated availability}, where $g(m|n) \equiv g(m)$. \red{For example, in service/resource access applications, where holders are the nodes that can provide access to some resources (e.g. Internet access, software)~\cite{Scampi-OppComp}, the availability depends on the number of devices with the given resources rather than the number of users that are interested in them.}

(ii) \textit{Deterministic availability}, where:
\begin{equation} N_{a} = \rho\left(N_{p}\right) ~~\Leftrightarrow~~ 
 g(m|n) = \bigg\{
\begin{tabular}{l}
$1,~~m=\rho(n)$\\
$0,~~\text{otherwise}$
\end{tabular}\nonumber
\end{equation}
where $\rho(n): [1,N] \rightarrow [0,N]$ can be an arbitrary function. \red{This case corresponds to applications where the number of holders is selected (by a centralized authority, a distributed protocol, etc.) according to the popularity of a content. Note also that this deterministic formula $\rho(n)$ can be used as an approximation of the general case, where the noise around the mean is ignored (i.e., only the mean value, rather than the exact distribution, of the availability needs to be known/estimated):
\[\rho(n) = \bar{g}(n) \equiv \sum_{m} m\cdot g(m|n)\]}

\begin{table}
 \centering
 \caption{Important Notation}
\begin{scriptsize}
\begin{tabular}{|l|l|c|}
\hline 
\multicolumn{3}{|l|}{MOBILITY (Section~\ref{sec:mobility-model})}\\
\hline
$\lambda_{ij}$	& Contact rate between nodes $i$ and $j$ &{}\\
\hline
$f_{\lambda}(\lambda)$		& Contact rates distribution & {}\\
\hline
$\mu_{\lambda}, \sigma_{\lambda}^{2}$	& Mean value/ variance of contact rates, $CV_{\lambda} = \frac{\sigma_{\lambda}}{\mu_{\lambda}}$& {}\\
\hline
\multicolumn{3}{|l|}{CONTENT TRAFFIC (Section~\ref{sec:content-traffic-model})}\\
\hline
$i\rightarrow\mathcal{M}$		& Node $i$ \textit{is interested} / \textit{requests} content $\mathcal{M}$ &{}\\
\hline
$\textbf{M}$		& Set of contents in the network, $| \textbf{M} | = M$. &{}\\
\hline
$\Np$	& Popularity of content $\mathcal{M}$ &{Def.~\ref{def:content-popularity}}\\
\hline
$\Cp$		& Set of nodes interested in content $\mathcal{M}$ &{Def.~\ref{def:content-popularity}}\\
\hline
$P_{p}(n)$		& Probability distribution of content popularity &\eq{eq:popularity-distribution} \\
\hline
$i \leftarrow\mathcal{M}$ &  Node $i$ \textit{holds a copy} of content $\mathcal{M}$ &{}\\
\hline
$\Na$	& Availability of content $\mathcal{M}$ &{Def.~\ref{def:content-avalability}}\\
\hline
$\Ca$		& Set of nodes that hold a copy of content $\mathcal{M}$ &{Def.~\ref{def:content-avalability}}\\
\hline
$g(m|n)$	& Availability - Popularity relation &{Def.~\ref{def:avail-popul-relation-generic}}\\
\hline
$\rho(n)$	& Deterministic case for $g(m|n)$&{}\\
\hline
$\overline{g}(n)$	&The average value of $g(\cdot|n)$&{}\\
\hline
\multicolumn{3}{|l|}{ANALYSIS (Section~\ref{sec:preliminary-analysis})}\\
\hline
$P_{p}^{req.}(n)$	& Popularity distribution of a random request&{Lemma~\ref{thm:Pint-popularity}}\\
\hline
$P_{a}^{req.}(n)$	& Availability distribution of a random request&{Lemma~\ref{thm:Pint-availability-generic}}\\
\hline
$T_{ij}$	& Time of \textit{next} meeting between nodes $i$ and $j$&{}\\
\hline
$T_{\mathcal{M}}$	& Content access time &{}\\
\hline
$\XM$	& Sum of meeting rates of $j$ and nodes $\in\Ca$&{\eq{eq:XM-definition}}\\
\hline
\end{tabular}
\end{scriptsize}
\label{table:notation}
\end{table}

\section{Analysis of Content Requests}\label{sec:analysis}
We will now analyze how different popularity, availability, and mobility patterns (possibly arising from different applications, policies, and network settings) affect performance metrics like: (i) the delay to access a content of interest, (ii) the probability to retrieve a content before a  deadline. A key parameter for these metrics is the number of holders for the requested content (availability). The higher this number, the sooner a requesting node will encounter one of them.  

While content availability might sometimes be time dependent~\cite{CEDO}, or the content holders might be chosen based on their mobility properties~\cite{multiple-offloading}, \red{as a first step we make two additional, restrictive assumptions that allow us to derive simple, useful expressions. Later, in Section~\ref{sec:analysis-extensions}, we relax both these assumptions, and show how our analysis and results can be modified to capture more generic scenarios where availability can be dependent on the time (or the content dissemination process) and mobility patterns.}

\begin{assumption}\label{ass:availability-time-independence}
The (i) popularity $\Np$ and (ii) availability $\Na$ of a content $\mathcal{M}$ do not change over time.
\end{assumption}

\begin{assumption}\label{ass:traffic-mobility-independence}
The sets of requesters $\Cp$ and holders $\Ca$ of a content $\mathcal{M}$ are independent of node mobility.
\end{assumption} 

\red{Regarding the validity of Assumption~\ref{ass:availability-time-independence}, it can be safely assumed that users' interests do not change, at least in the time window of a content delivery. This is a common assumption in the majority of related works. As a result, \textit{content popularity}, which is given by the number of the nodes interested in a content, is not expected to change as well.}

\red{With respect to \textit{content availability}, the assumption is valid (or a good approximation) in a number of applications. For example, in the case that the number of holders is chosen by the cellular operator~\cite{Hui-Offloading,multiple-offloading} or content provider, and other nodes cannot act as holders or do not have incentives to do so. It is also valid when the ``content" is a service (e.g. Internet access, or specific sensor) that is offered only by a certain number of devices~\cite{Scampi-OppComp}. Moreover, in content sharing applications / protocols where users have a limited ``budget" of $L$ copies that can distribute to relay nodes (i.e. the holders), if $L\equiv \Na \ll N$ and $\Np\ll N$ (which is reasonable for a typical opportunistic networking scenario), then the time of the initial content distribution to holders is much less than the time needed by a requester to access the content\footnote{\revisionRed{In a simple example, of a network with $N=1000$ nodes and $\lambda_{ij}=\lambda$, a ``source'' node originally has a content, in which $\Np=20$ nodes are interested. The ``source'' replicates the content to the first $L=4$ nodes it meets (cf. \textit{source Spray and Wait} protocol~\cite{spray-and-wait}), which act as holders. Then, it can be easily shown that the expected time till \textit{all} holders get the content is $E[T_{s}] \approx \frac{\Na}{N\cdot \lambda} = \frac{1}{250\cdot \lambda}$, while the expected time a (i.e., \textit{any}) requester to access the content is $E[T_{a}] \approx \frac{1}{\Na\cdot\lambda} = \frac{1}{(L+1)\cdot\lambda}= \frac{1}{5\lambda}$, or equivalently $E[T_{a}] = 50\cdot E[T_{s}]\gg E[T_{s}]$.}}. Hence, considering only the time of the content sharing process \textit{after} the initial distribution to holders, the condition for time-invariant availability holds.}

\red{Nevertheless, in scenarios where a content is disseminating and new nodes (e.g. the requesters after receiving it) are willing to share it~\cite{offloading-wowmom11}, then the availability might change over time. We consider and analyze such cases in Section~\ref{sec:analysis-extensions}, as an extension of our basic results of Section~\ref{sec:performance-metrics}.}

\red{Assumption~\ref{ass:traffic-mobility-independence} holds when a \textit{mobility oblivious} allocation policy (i.e. randomized protocols) is considered, e.g.~\cite{CEDO}, or the homogeneous algorithm of~\cite{multiple-offloading}. It is also a reasonable approximation, in settings where there is no knowledge of the interests-mobility correlation, if any.}

\red{Nevertheless, there exist scenarios where who holds what content might depend on the contact rates with other nodes (i.e. the mobility), and such a dependence can possibly affect the performance. This dependence might occur due to the employed dissemination protocol~\cite{contentplace, multiple-offloading}. In fact, many protocols proposed in related literature, try to exploit mobility or social characteristics of nodes, in order to find a set of holders that contact regularly the requesters and can, thus, deliver the contents to them in a fast and efficient way.} 

\red{However, due to the different mechanisms employed, a different (and very complex in some cases) analytic approach would be needed for each protocol. To this end, in Section~\ref{sec:analysis-extensions}, we \textit{do} take into account mobility-aware schemes, in a generic and application-independent way. Furthermore, with this proposed extension of our model, one can capture scenarios where mobility-availability correlation do not come (necessarily) from a dissemination protocol, but they exist due to some underlying heterogeneous traffic patterns~\cite{pavlos-TMC-traffic}.}

\subsection{Preliminary Analysis}\label{sec:preliminary-analysis}

\red{Assume a content-centric application with many different contents. To predict the performance of such a system, we would like to know how long the \textit{average} request takes to be satisfied. To do so, let us pick some random user request (over all the requests made for different contents), and let us assume that this request is for some content $\mathcal{M}$\footnote{\red{We stress here that we do not refer to a certain content, but we denote the \textit{content related to the random request} as $\mathcal{M}$ for ease of reference. Hence, in the remainder we do not use the superscript $\mathcal{M}$ in the related notation. E.g. we denote the popularity as $N_{p}$ instead of $\Np$.}}.}

We first need to answer the following two questions:

\begin{enumerate}
\item[\textbf{Q.1}] What is the popularity of $\mathcal{M}$?
\item[\textbf{Q.2}] How fast does a requesting node meet $\mathcal{M}$'s holders?
\end{enumerate}

Q.1 is needed to predict the availability for the content of the random request. Given this availability, Q.2 will estimate the (sum of) contact rates between the requesting node and the holders, according to Assumptions~\ref{ass:heterogeneous-mobility} and~\ref{ass:traffic-mobility-independence}. \red{The contact rates between the requester and the holders will be then used (Section~\ref{sec:performance-metrics}) in calculating how fast the request will be satisfied.}

~\\
\noindent\underline{Answering \textbf{Q.1}}

It is easy to see that the popularity of $\mathcal{M}$ should be proportional to $P_{p}(n)$: the higher the number of different contents with a popularity value $n$, the higher the chance that $\mathcal{M}$ will be of popularity $n$. However, the higher the popularity of a content, the more the requests made for it. Hence, a first important observation is that the popularity of the content of such a \textit{random request} is \textit{not} distributed as $P_{p}(n)$ but is also proportional to the popularity value $n$.

Consider a stylized example, where only two contents exist in the network, content A with popularity value $10$ and content B with popularity value $1$. Hence, ``half'' the contents are of high popularity ($N_{p}=10$), and ``half'' of low ($N_{p}=1$), or in other words $P_{p}(10) = P_{p}(1) = \frac{1}{2}$. \red{However, there will be $10$ times more requests for content A than for content B.} Consequently, if we select a request randomly, there is a $10\times$ higher chance that it will be for content A, that is, for the content of popularity $10$. Normalizing to have a proper probability distribution gives us the following lemma.

\begin{lemma}\label{thm:Pint-popularity}
The probability that a random request is for a content of popularity equal to $n$ is given by
\begin{align}
 P_{p}^{req.}(n) = \frac{n}{E_{p}[n]}\cdot P_{p}(n)\nonumber
\end{align}
where $E_{p}[n] = \sum_{n}n\cdot P_{p}(n)$ is the average content popularity~\footnote{We use subscript $p$ to denote an expectation over the popularity distribution $P_{p}(n)$, and $n$ denotes the random popularity values.}.
\end{lemma}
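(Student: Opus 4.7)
The plan is to treat this as a size-biased (length-biased) sampling problem. The key subtlety, which is really the only thing to get right, is to be precise about what ``random request'' means: we are sampling uniformly over the pool of all individual requests made in the network, not uniformly over the set of contents $\mathbf{M}$. Once that is nailed down, everything reduces to a counting argument.

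First I would enumerate the sample space. By Definition~\ref{def:content-popularity}, a content $\mathcal{M}$ with popularity $\Np$ generates exactly $\Np$ requests (one from each node in $\Cp$). Hence the total number of requests in the network is $R = \sum_{\mathcal{M} \in \mathbf{M}} \Np$. Grouping contents by popularity value via \eq{eq:popularity-distribution}, the number of contents whose popularity equals $n$ is $M \cdot P_p(n)$, each contributing $n$ requests, so the number of requests directed at content of popularity exactly $n$ is $n \cdot M \cdot P_p(n)$. Rewriting the denominator in the same grouped form gives $R = \sum_n n \cdot M \cdot P_p(n) = M \cdot E_p[n]$.

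Next, since a uniformly random request lands on a particular request with probability $1/R$, the probability that it lands on some request for a content of popularity $n$ is simply the ratio of the two counts:
\begin{equation}
P_p^{req.}(n) \;=\; \frac{n \cdot M \cdot P_p(n)}{M \cdot E_p[n]} \;=\; \frac{n}{E_p[n]} \cdot P_p(n),\nonumber
\end{equation}
which is exactly the claimed formula. As a sanity check, summing over $n$ yields $\frac{1}{E_p[n]}\sum_n n\cdot P_p(n) = 1$, confirming that $P_p^{req.}$ is a valid probability mass function.

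There is no genuine obstacle here; the only risk is a conceptual one, namely conflating ``pick a random content'' (which gives $P_p(n)$) with ``pick a random request'' (which gives the size-biased version). I would therefore open the proof with the stylized two-content example already given in the main text as motivation, and then present the two-line counting argument above. The same derivation will later be reused verbatim (or by analogy) to justify the companion Lemma~\ref{thm:Pint-availability-generic} for the availability distribution of a random request, so it is worth stating the counting step in a way that transparently generalizes to any per-content weight, not just the popularity $n$.
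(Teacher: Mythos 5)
Your proposal is correct and follows essentially the same route as the paper, which argues informally that the popularity of a randomly chosen request is proportional to both $P_{p}(n)$ and $n$ (via the two-content example) and then normalizes; you simply make that size-biased counting explicit by computing the request totals $n\cdot M\cdot P_{p}(n)$ and $M\cdot E_{p}[n]$. No gap.
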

\red{\underline{Remark:} We would like to mention here that in some related works, the popularity distribution is defined over the different popularities values in a set of contents, which in our framework corresponds to the distribution $P_{p}^{req.}(n)$. In contrast, we define (in a more generic way) the popularity distribution $P_{p}(n)$ as the percentage of contents having (exactly) $n$ requesters. The correspondence between the two approaches can be made using Lemma~\ref{thm:Pint-popularity}.}

\red{For the convenience of the reader, we state the following corollary that makes the aforementioned correspondence for an important example case, the Zipf-law (or discrete Pareto) distribution, which is frequently observed in real systems~\cite{youtube-traffic-from-edge, RSS-traffic-characteristics, pavlos-dataset-AOC} and used by many related studies~\cite{Gao-user-centric-DTN,multiple-offloading,CEDO}. Corollary~\ref{thm:corollary-pareto} follows directly from the expression of Lemma~\ref{thm:Pint-popularity}, and thus we omit the detailed proof.
\begin{corollary}\label{thm:corollary-pareto}
If $P_{p}^{req.}(n)$ is given by a Zipf distribution with a shape parameter $\alpha>0$, then $P_{p}(n)$ is given by a Zipf distribution with a shape parameter $\alpha +1$ (and vice versa).
\begin{equation*}
P_{p}^{req.}(n)\sim Zipf(\alpha) ~~\Leftrightarrow~~ P_{p}(n)\sim Zipf(\alpha+1) 
\end{equation*}
\end{corollary}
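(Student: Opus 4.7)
The plan is a direct computation using Lemma~\ref{thm:Pint-popularity}, exploiting the fact that a Zipf distribution is characterized by having a probability mass function proportional to $n^{-\alpha}$ on its support (say $n\in[1,N]$), with all remaining freedom absorbed into the normalizing constant $H(\alpha)=\sum_{n} n^{-\alpha}$.

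First, I would prove the forward direction. Suppose $P_{p}(n)\sim \mathrm{Zipf}(\alpha+1)$, i.e.\ $P_{p}(n)=\frac{1}{H(\alpha+1)}\cdot n^{-(\alpha+1)}$. Plugging directly into Lemma~\ref{thm:Pint-popularity} gives
\begin{equation*}
P_{p}^{req.}(n)=\frac{n}{E_{p}[n]}\cdot\frac{1}{H(\alpha+1)}\cdot n^{-(\alpha+1)} = \frac{1}{E_{p}[n]\cdot H(\alpha+1)}\cdot n^{-\alpha},
\end{equation*}
which is proportional to $n^{-\alpha}$ and hence is $\mathrm{Zipf}(\alpha)$; the leading constant is fixed by normalization, so it must equal $1/H(\alpha)$.

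Second, I would prove the converse in the same manner. Starting from $P_{p}^{req.}(n)=\frac{1}{H(\alpha)}\cdot n^{-\alpha}$ and inverting the relation of Lemma~\ref{thm:Pint-popularity}, we obtain $P_{p}(n)=\frac{E_{p}[n]}{n}\cdot P_{p}^{req.}(n)\propto n^{-(\alpha+1)}$, which is a $\mathrm{Zipf}(\alpha+1)$ distribution after normalization. In both directions, $E_{p}[n]$ is a positive, finite constant (for a finite support $n\in[1,N]$), so it plays no role beyond fixing the normalization.

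There is essentially no obstacle: the argument is a one-line substitution in each direction, and the only thing to be careful about is that the supports of $P_{p}$ and $P_{p}^{req.}$ match (both being $n\geq 1$, since a content with $N_{p}=0$ generates no requests and hence does not appear in $P_{p}^{req.}$). If one preferred a more compact presentation, both implications could be combined into a single ``$\propto n^{-\alpha}\Leftrightarrow \frac{1}{n}\cdot\propto n^{-\alpha}\propto n^{-(\alpha+1)}$'' chain, which is precisely why the authors note that a detailed proof can be omitted.
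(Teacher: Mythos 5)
Your proposal is correct and is exactly the argument the paper has in mind: the authors state that the corollary ``follows directly from the expression of Lemma~\ref{thm:Pint-popularity}'' and omit the details, and your substitution of the Zipf form $n^{-\alpha}$ into that lemma (with normalization absorbed into the constant) is precisely that direct computation in both directions. No gap; the remark about matching supports ($n\geq 1$) is a sensible added precaution.
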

}

~\\
\noindent\underline{Answering \textbf{Q.2}}

The answer to question~Q.2 consists of two separate steps: (i) we calculate the number of holders \red{for the content of the random request}, and then (ii) we calculate how fast the requesting node can meet these holders. Towards answering (i), Lemma~\ref{thm:Pint-availability-generic} maps the popularity of the content involved in a random request (derived in Lemma~\ref{thm:Pint-popularity}) to the number of holders for this content. This number is a random variable dependent both on the popularity distribution $P_{p}(n)$, and on the availability function $g(m|n)$.
\begin{lemma}\label{thm:Pint-availability-generic}
The probability that a random request is for a content of availability equal to $m$ is given by
\begin{align}
 P_{a}^{req.}(m) = \frac{E_{p}[n\cdot g(m|n)]}{E_{p}[n]}\nonumber
\end{align}
\end{lemma}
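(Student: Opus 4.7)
The plan is to derive $P_a^{req.}(m)$ by conditioning on the popularity of the content associated with the random request, and then marginalizing using Lemma~\ref{thm:Pint-popularity} together with the availability--popularity relation $g(m|n)$ from Definition~\ref{def:avail-popul-relation-generic}. The key observation is that, given the popularity $n$ of the requested content, the availability is distributed independently of \emph{which} request was drawn: the conditional law $g(m|n)$ depends only on $n$, not on the identity of the requester or on any other property of the random draw.

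Concretely, I would write
\begin{equation*}
P_a^{req.}(m) \;=\; \sum_{n} P\{N_a = m \mid N_p = n\}\cdot P_p^{req.}(n),
\end{equation*}
which is just the law of total probability applied to the joint distribution of $(N_p,N_a)$ induced by the random request. Substituting $P\{N_a = m \mid N_p = n\} = g(m|n)$ from Definition~\ref{def:avail-popul-relation-generic} and $P_p^{req.}(n) = \frac{n}{E_p[n]}\cdot P_p(n)$ from Lemma~\ref{thm:Pint-popularity} then yields
\begin{equation*}
P_a^{req.}(m) \;=\; \frac{1}{E_p[n]} \sum_{n} n\cdot g(m|n)\cdot P_p(n) \;=\; \frac{E_p[n\cdot g(m|n)]}{E_p[n]},
\end{equation*}
which is exactly the claimed formula.

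There is no real obstacle to this argument; the only subtlety worth stating carefully is the conditional independence used above, namely that once we fix $N_p = n$, the distribution of the availability $N_a$ for the content associated with the random request coincides with $g(m|n)$. This is because the size-biased sampling induced by picking a uniformly random \emph{request} (rather than a uniformly random \emph{content}) only reweights contents according to their popularity $n$: within the class of contents of popularity $n$, each content is still equally likely to be the one selected, so the conditional law of its availability is the same $g(m|n)$ specified in Definition~\ref{def:avail-popul-relation-generic}. Once this point is noted, the proof reduces to the one-line total-probability calculation above.
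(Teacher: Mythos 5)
Your proof is correct and follows essentially the same route as the paper's: conditioning on the popularity of the requested content, substituting $g(m|n)$ and $P_p^{req.}(n)$ from Lemma~\ref{thm:Pint-popularity}, and simplifying. Your extra remark about why the size-biased sampling does not distort the conditional law $g(m|n)$ is a sound clarification of a step the paper leaves implicit, but it does not change the argument.
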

\begin{proof}
The popularity of the content of a random request is given by $P_{p}^{req.}(n)$. Its availability can then be calculated by using the property of conditional expectation~\cite{RossProbModels}:
\begin{align}
 P_{a}^{req.}(m) = \sum_{n} P\lbrace N_{a} =m \vert N_{p} =n \rbrace\cdot P_{p}^{req.}(n)\nonumber
\end{align}
where $P_{p}^{req.}(n)$ is defined in Lemma~\ref{thm:Pint-popularity}. Substituting, from Def.~\ref{def:avail-popul-relation-generic} and Lemma~\ref{thm:Pint-popularity}, the above terms, we successively get
\begin{align}\label{eq:p-query-generic-3}
P_{a}^{req.}(m) &= \sum_{n} g(m|n)\cdot \frac{n}{E_{p}[n]}\cdot P_{p}(n)\nonumber\\
				 &= \frac{\sum_{n} g(m|n)\cdot n\cdot P_{p}(n)}{E_{p}[n]}= \frac{E_{p}[n\cdot g(m|n)]}{E_{p}[n]}\nonumber
\end{align}
which completes the proof.
\end{proof}

Having computed the statistics for the content availability, we can now calculate how fast the requesting node, say $j$, meets \textit{any} of the holders $i$ (i.e. nodes $i\in \mathcal{C}_{a}$). \red{As discussed in Section~\ref{sec:mobility-model}, we focus on the case of \textit{exponentially distributed inter-contact times}. Later, in Section~\ref{sec:analysis-extensions} we consider Pareto distributed inter-contact times as well.}

Let $T_{ij}$ denote the inter-contact times between node $j$ and a node $i\in\mathcal{C}_{a}$, and let  $T_{ij}$ be exponentially distributed with rate $\lambda_{ij}$. If we denote with $T_{\mathcal{M}}$ the first time until $j$ meets \textit{any} of the nodes $i\in\mathcal{C}_{a}$ (and, thus, accesses the content), then: 
\red{
\begin{equation*}
T_{\mathcal{M}} = \min_{i\in\mathcal{C}_{a}}\lbrace T_{ij}^{(r)}\rbrace
\end{equation*}
where $T_{ij}^{(r)}$ is the \textit{residual} inter-contact time between a node pair $\{i,j\}$. However, for the exponential distribution it holds that 
\begin{equation*}
T_{ij}\sim exponential(\lambda_{ij}) \Rightarrow T_{ij}^{(r)}\sim exponential(\lambda_{ij})
\end{equation*}
Therefore, $T_{\mathcal{M}}$ is distributed as a minimum of exponential random variables, and it follows that~\cite{RossProbModels}:}
\begin{equation}
T_{\mathcal{M}} \sim exp\left(\XM\right)~~\Leftrightarrow~~P\lbrace T_{\mathcal{M}}>t\rbrace  = e^{-\XM \cdot t}
\end{equation}
where
\begin{equation}\label{eq:XM-definition}
\XM = \sum_{i\in\mathcal{C}_{a}}\lambda_{ij}
\end{equation}

Clearly, knowing $\XM$ is needed to proceed with the desired metric derivation. Based on the preceding discussion, $\XM$ is a random variable that depends on: (i) the number of content holders $m$ (i.e. the cardinality of set $\mathcal{C}_{a}$ in Eq.(\ref{eq:XM-definition})), and (ii) the meeting rates with the holders. Applying Assumption~\ref{ass:traffic-mobility-independence}, it holds that, conditioning on $m$, $\XM$ (\eq{eq:XM-definition}) is a sum of $m$ i.i.d. random variables $\lambda_{ij}\sim f_{\lambda}(\lambda)$, i.e
\begin{equation}\label{eq:definition-fml}
\XM\sim f_{m\lambda}(x) = \left(f_{\lambda}\ast f_{\lambda} \cdots \ast f_{\lambda} \right)_{m},
\end{equation}
where $\ast$ denotes convolution, and mean value~\cite{RossProbModels}:
\begin{equation}\label{eq:mean-ml-x}
E[\XM|N_{a}=m] = E_{m\lambda}[x] =m\cdot \mu_{\lambda}
\end{equation}
\red{\textit{Remark}: In the remainder we use the subscript $m\lambda$ to denote an expectation over the distribution $f_{m\lambda}(x)$; the corresponding random variables are denoted as $x$.}

\subsection{Performance Metrics}\label{sec:performance-metrics}

We consider two main performance metrics: the \textit{average delay} and \textit{delivery probability}. Based on the analysis of Section~\ref{sec:preliminary-analysis}, we derive results under generic content traffic (i.e. $P_{p}(n)$ and $g(m|n)$) and mobility (i.e. $f_{\lambda}(\lambda)$) patterns.


\subsection*{\textbf{Content Access Delay}}

\begin{result}\label{result:ETm}
The expected content access delay can be computed with the expression
\begin{equation}
E[T_{\mathcal{M}}]= \frac{1}{E_{p}[n]}\cdot E_{p}\left[n\cdot\sum_{m} E_{m\lambda}\left[\frac{1}{x}\right] \cdot  g(m|n)\right]\nonumber
\end{equation}
\end{result}
\begin{proof}
The time $T_{\mathcal{M}}$ a node $j$ needs to access a content $\mathcal{M}$ is exponentially distributed with rate $X_{\mathcal{M}}$. However, $X_{\mathcal{M}}$ is a random variable itself, distributed with $f_{m\lambda}(x)$ (\eq{eq:definition-fml}). Thus, we can write for the expected content access delay:

\begin{footnotesize}
\begin{align}\label{eq:expected-Ta-generic}
&E[T_{\mathcal{M}}] 	= \sum_{m} E[T_{\mathcal{M}}\vert N_{a}=m] \cdot P_{a}^{req.}(m)\nonumber\\
					&= \sum_{m} \int E[T_{\mathcal{M}}\vert \XM=x,N_{a}=m]\cdot f_{m\lambda}(x)dx \cdot P_{a}^{req.}(m)\nonumber\\ 
					&= \sum_{m} \int \frac{1}{x}\cdot f_{m\lambda}(x)dx \cdot P_{a}^{req.}(m)
\end{align}
\end{footnotesize}
The last equality follows from the fact that the expectation of an exponential random variable with rate $x$ is $\frac{1}{x}$. 

Expressing the integral in \eq{eq:expected-Ta-generic} as an expectation over the $f_{m\lambda}(x)$ and substituting $P_{a}^{req.}(m)$ from Lemma~\ref{thm:Pint-availability-generic}, gives
\begin{align}\label{eq:expected-Ta-generic-1}
E[T_{\mathcal{M}}] 	&= \sum_{m} E_{m\lambda}\left[\frac{1}{x}\right] \cdot \frac{E_{p}[n\cdot g(m|n)]}{E_{p}[n]}\nonumber\\
					&= \frac{1}{E_{p}[n]}\cdot\sum_{m} E_{m\lambda}\left[\frac{1}{x}\right] \cdot E_{p}[n\cdot g(m|n)]
\end{align}
Rearranging the expectations and summation in \eq{eq:expected-Ta-generic-1} we get the expression of Result~\ref{result:ETm}.
\end{proof}

If the functions $f_{\lambda}(\lambda)$, $g(m|n)$ and $P_{p}(n)$ are known, the expected delay $E[T_{\mathcal{M}}]$ can be computed directly from Result~\ref{result:ETm}, as shown in the following example. 

\textit{\underline{Example Scenario:}} The contact rates ($f_{\lambda}$) follow a \textit{gamma distribution}, as suggested in~\cite{Passarella-aggregateIT}, with $\mu_{\lambda}$ and $CV_{\lambda}$.  Content popularity $P_{p}(n)$ is Pareto distributed, as observed in~\cite{youtube-traffic-from-edge, RSS-traffic-characteristics, pavlos-dataset-AOC}, with \textit{scale} and \textit{shape} parameters $n_{0}$ and $\alpha=2$, respectively. Finally, we consider a (deterministic) allocation of holders, $\rho(n)=c\cdot n$ (see Section~\ref{sec:content-traffic-model}). Then a closed form expression for $E[T_{M}]$ is given in the first row of Table~\ref{table:expressions-case-study}.

\begin{table}[!h]
\centering
\caption{Performance Metrics when $f_{\lambda}\sim Gamma$ with $\mu_{\lambda}, CV_{\lambda}$ and $P_{p}(n)\sim Pareto(n_{0},\alpha=2)$.}
\small
\begin{tabular}{|l|l|}
\hline
{}&{}\\
\hspace{-0.02\linewidth}
$\rho(n) = c\cdot n$ &
\hspace{-0.02\linewidth}
$E[T_{\mathcal{M}}]=\frac{1}{\mu_{\lambda}\cdot CV_{\lambda}^{2}} \left[\frac{c\cdot n_{0}}{CV_{\lambda}^{2}}\cdot\ln\left(\frac{1}{1-\frac{CV_{\lambda}^{2}}{c\cdot n_{0}}}\right)-1\right]$\\
{}&{}\\
\hline
\hspace{-0.02\linewidth}
$\rho(n) = c\cdot \ln(n)$ \hspace{-0.02\linewidth} &
\hspace{-0.02\linewidth}
$P\lbrace T_{\mathcal{M}}\leq TTL\rbrace = 1-\frac{1}{(1+\ln(\gamma))\cdot \gamma^{\ln(n_{0})}}$  \\
{}&
\multicolumn{1}{r|}{
\hspace{-0.02\linewidth}
where $\gamma = (1+\mu_{\lambda}\cdot CV_{\lambda}^{2}\cdot TTL)^{\frac{c}{CV_{\lambda}^{2}}}$
}\\
\hline
\end{tabular}
\normalsize
\label{table:expressions-case-study}
\end{table}

However, in a real implementation, it might not be always possible to know the \textit{exact} distributions of the contact rates ($f_{\lambda}$) and/or the availabilities ($g(m|n)$), needed to compute the expression of Result~\ref{result:ETm}. In the following theorem, we derive an expression for $E[T_{M}]$ that requires only the \textit{average statistics} (which are much easier to estimate or measure in a real scenario), namely (i) the mean value of the contact rates, $\mu_{\lambda}$, and (ii) the average availability for contents of a given popularity, $\overline{g}(n)$. 
\begin{theorem}\label{thm:lower-bound-ETm}
A lower bound for the expected content access delay is given by
\begin{equation}
E[T_{\mathcal{M}}]\geq \frac{1}{\mu_{\lambda}\cdot E_{p}[n]}\cdot E_{p}\left[\frac{n}{\overline{g}(n)}\right]\nonumber
\end{equation}
\end{theorem}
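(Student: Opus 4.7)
The plan is to start from the exact expression of Result~\ref{result:ETm} and apply Jensen's inequality twice, exploiting the convexity of the function $h(y)=1/y$ on the positive reals. The bound is appealing because it replaces the two full distributions $f_{\lambda}$ and $g(m|n)$ by their means $\mu_{\lambda}$ and $\overline{g}(n)$, so the natural route is to push these means inside via Jensen at the two levels where the randomness enters.

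First, I would isolate the inner expectation $E_{m\lambda}[1/x]$. Since $x$ is a sum of $m$ i.i.d.\ contact rates with mean $m\mu_{\lambda}$ (see \eq{eq:mean-ml-x}), convexity of $1/x$ gives
\begin{equation}
E_{m\lambda}\!\left[\tfrac{1}{x}\right] \;\geq\; \frac{1}{E_{m\lambda}[x]} \;=\; \frac{1}{m\,\mu_{\lambda}}.\nonumber
\end{equation}
Plugging this into Result~\ref{result:ETm} yields
\begin{equation}
E[T_{\mathcal{M}}] \;\geq\; \frac{1}{\mu_{\lambda}\cdot E_{p}[n]}\cdot E_{p}\!\left[n\cdot\sum_{m}\frac{g(m|n)}{m}\right].\nonumber
\end{equation}

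Next, I would recognize $\sum_{m} g(m|n)/m = E\!\left[1/N_{a}\,\big|\,N_{p}=n\right]$, and apply Jensen's inequality a second time (again to $1/y$) conditionally on $N_{p}=n$:
\begin{equation}
\sum_{m}\frac{g(m|n)}{m} \;\geq\; \frac{1}{\sum_{m} m\cdot g(m|n)} \;=\; \frac{1}{\overline{g}(n)},\nonumber
\end{equation}
using the definition of $\overline{g}(n)$ from Section~\ref{sec:content-traffic-model}. Substituting this back produces exactly
\begin{equation}
E[T_{\mathcal{M}}] \;\geq\; \frac{1}{\mu_{\lambda}\cdot E_{p}[n]}\cdot E_{p}\!\left[\frac{n}{\overline{g}(n)}\right],\nonumber
\end{equation}
which is the claimed bound.

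The only subtle point, and thus the main thing I would be careful about, is handling the possibility $m=0$ (a request for a content with no holders), where $1/m$ and $1/x$ are not defined. This is really a modeling issue: either one assumes $g(0|n)=0$ for popularities $n\geq 1$ that the framework considers (so the sum is effectively over $m\geq 1$), or one treats the corresponding delay as $+\infty$, in which case the inequality holds trivially on that branch and the Jensen step only needs to be justified on the support $\{m\geq 1\}$. Apart from this, the two Jensen applications are routine, and no further assumption on $f_{\lambda}$ or $g(\cdot|n)$ beyond finite means is required.
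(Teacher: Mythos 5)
Your proposal is correct and follows essentially the same route as the paper's proof: apply Jensen's inequality to $E_{m\lambda}\left[\frac{1}{x}\right]$ to get $\frac{1}{m\mu_{\lambda}}$, substitute into Result~\ref{result:ETm}, then apply Jensen a second time to the expectation over $g(\cdot|n)$ to obtain $\frac{1}{\overline{g}(n)}$. Your extra remark about the $m=0$ case is a reasonable precaution that the paper leaves implicit, but it does not change the argument.
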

\begin{proof}
In Result~\ref{result:ETm} we can express $E_{m\lambda}\left[\frac{1}{x}\right]$ as $E_{m\lambda}[h(x)]$, where $h(x)=\frac{1}{x}$. Since $h(x)$ is a convex function, applying \emph{Jensen's inequality}, i.e. $h\left(E[x]\right)\leq E[h(x)]$, gives
\begin{equation}\label{eq:jensen-ml}
E_{m\lambda}\left[\frac{1}{x}\right]\geq \frac{1}{E_{m\lambda}[x]} = \frac{1}{m\cdot \mu_{\lambda}}
\end{equation}
where, in the equality, we used \eq{eq:mean-ml-x}.

Substituting \eq{eq:jensen-ml} in the expression of Result~\ref{result:ETm}, gives
\begin{align}\label{eq:expected-Ta-generic-4}
E[T_{\mathcal{M}}] 	&\geq	\frac{1}{ \mu_{\lambda}\cdot E_{p}[n]}\cdot E_{p}\left[n\cdot\sum_{m} \frac{1}{m} \cdot  g(m|n)\right]			
\end{align}
The sum in \eq{eq:expected-Ta-generic-4} is the expectation over $g(\cdot|n)$, i.e.
\begin{equation}
\sum_{m} \frac{1}{m} \cdot  g(m|n) = E_{g}\left[\frac{1}{m}\right]
\end{equation}
Applying, as before, Jensen's inequality, we get
\begin{equation}\label{eq:jensen-g}
\sum_{m} \frac{1}{m} \cdot  g(m|n) = E_{g}\left[\frac{1}{m}\right]\geq \frac{1}{E_{g}[m]}=\frac{1}{\overline{g}(n)}
\end{equation}
where we used for $E_{g}[m]$ the notation $\overline{g}(n)$.

Combining \eq{eq:jensen-g} and \eq{eq:expected-Ta-generic-4}, the expression of the theorem follows directly.
\end{proof}


\subsection*{\textbf{Content Access Probability}}

One often needs to also know the probability that a node can access a content by some deadline, i.e. $P\{T_{\mathcal{M}}\leq TTL\}$. E.g, a node might lose its interest in a content (e.g. news) after some time, or in an offloading scenario a node might decide to access a content directly to the base station. 
\begin{result}\label{result:P-Tm-TTL}
The probability a content to be accessed before a time $TTL$ can be computed with the expression
\begin{equation}
P\{T_{\mathcal{M}}\leq TTL\}= 1-\frac{E_{p}\left[n\cdot\sum_{m} E_{m\lambda}\left[e^{-x\cdot TTL}\right] g(m|n)\right]}{E_{p}[n]}\nonumber
\end{equation}
\end{result}
\begin{proof}
Conditioning on the values of $N_{a}$ and $\XM$, as in \eq{eq:expected-Ta-generic}, we can write:
\begin{small}
\begin{align}\label{eq:probability-Ta-generic}
&P\{T_{\mathcal{M}}\leq TTL\} 	=\nonumber\\
&= \sum_{m} \int P\{T_{\mathcal{M}}\leq TTL\vert \XM=x, N_{a}=m\}\cdot f_{m\lambda}(x)dx \cdot P_{a}^{req.}(m)\nonumber\\ 
					&= 1-\sum_{m} \int e^{-x\cdot TTL}\cdot f_{m\lambda}(x)dx \cdot P_{a}^{req.}(m)
\end{align}
\end{small}
where the last equality follows because $T_{\mathcal{M}}$ is exponentially distributed with rate $\XM=x$. %
After some similar steps as in Theorem~\ref{thm:lower-bound-ETm}, the final result follows.
\end{proof}
The expression of Result~\ref{result:P-Tm-TTL} for the previous example scenario, with a different allocation function $\rho(n) = c\cdot \ln(n)$, is given in the second row of Table~\ref{table:expressions-case-study}.

\begin{theorem}\label{thm:upper-bound-P-Tm-TTL}
An upper bound for the probability to access a content by a time $TTL$ is given by
\begin{equation}
P\{T_{\mathcal{M}}\leq TTL\}\leq 1-\frac{1}{E_{p}[n]}\cdot E_{p}\left[n\cdot e^{-\overline{g}(n)\cdot\mu_{\lambda}\cdot TTL}\right]\nonumber
\end{equation}
\end{theorem}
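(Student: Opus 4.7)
The plan is to mirror the two-step Jensen's inequality argument used in Theorem~\ref{thm:lower-bound-ETm}, but applied to the convex function $h(x)=e^{-x\cdot TTL}$ rather than $h(x)=1/x$. The starting point is the exact expression from Result~\ref{result:P-Tm-TTL}. Since we want an \emph{upper} bound on $P\{T_{\mathcal{M}}\leq TTL\}$ and this probability equals $1$ minus an expectation, I need a \emph{lower} bound on the term $E_{p}\!\left[n\cdot\sum_{m} E_{m\lambda}[e^{-x\cdot TTL}]\,g(m|n)\right]/E_{p}[n]$.

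First I would apply Jensen's inequality to the inner expectation over $f_{m\lambda}$. The map $x\mapsto e^{-x\cdot TTL}$ is convex for $TTL>0$, so
\begin{equation}
E_{m\lambda}\!\left[e^{-x\cdot TTL}\right]\;\geq\; e^{-E_{m\lambda}[x]\cdot TTL}\;=\;e^{-m\cdot\mu_{\lambda}\cdot TTL},\nonumber
\end{equation}
where the last equality uses \eq{eq:mean-ml-x}. Substituting this into the inner sum yields
\begin{equation}
\sum_{m} E_{m\lambda}\!\left[e^{-x\cdot TTL}\right]g(m|n)\;\geq\;\sum_{m} e^{-m\cdot\mu_{\lambda}\cdot TTL}\,g(m|n)\;=\;E_{g}\!\left[e^{-m\cdot\mu_{\lambda}\cdot TTL}\right].\nonumber
\end{equation}

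Next I would apply Jensen's inequality a second time, now to the expectation over $g(\cdot|n)$. The map $m\mapsto e^{-m\cdot\mu_{\lambda}\cdot TTL}$ is again convex, so
\begin{equation}
E_{g}\!\left[e^{-m\cdot\mu_{\lambda}\cdot TTL}\right]\;\geq\;e^{-E_{g}[m]\cdot\mu_{\lambda}\cdot TTL}\;=\;e^{-\overline{g}(n)\cdot\mu_{\lambda}\cdot TTL},\nonumber
\end{equation}
using the notation $\overline{g}(n)=E_{g}[m]$ introduced in the statement of Theorem~\ref{thm:lower-bound-ETm}. Chaining these two inequalities and taking the outer expectation $E_{p}[\,n\cdot(\cdot)\,]$ (noting that multiplication by $n\geq 0$ preserves the inequality) yields
\begin{equation}
E_{p}\!\left[n\cdot\sum_{m} E_{m\lambda}\!\left[e^{-x\cdot TTL}\right]g(m|n)\right]\;\geq\;E_{p}\!\left[n\cdot e^{-\overline{g}(n)\cdot\mu_{\lambda}\cdot TTL}\right].\nonumber
\end{equation}
Dividing by $E_{p}[n]>0$ and substituting into Result~\ref{result:P-Tm-TTL} gives the claimed upper bound.

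I do not expect any real obstacle here: both Jensen applications are legitimate because $e^{-c\cdot TTL}$ is convex for any $TTL\geq 0$, and the monotonicity step that lets me push the inequality through the nonnegative weight $n$ and through $E_{p}[\cdot]$ is routine. The only subtlety worth stating explicitly in the final write-up is that the bound is tight when both $f_{m\lambda}$ and $g(\cdot|n)$ are degenerate (i.e., deterministic contact rates and deterministic availability), which is exactly the regime where Assumption~\ref{ass:heterogeneous-mobility} collapses to $CV_{\lambda}=0$ and the availability function reduces to the deterministic case $\rho(n)$ of Section~\ref{sec:content-traffic-model}.
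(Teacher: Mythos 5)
Your proof is correct and follows exactly the route the paper intends: the paper's own proof of Theorem~\ref{thm:upper-bound-P-Tm-TTL} simply states that the bound ``follows easily by observing that $h(x)=e^{-x\cdot TTL}$ is a convex function, and applying Jensen's inequality and the methodology of Theorem~\ref{thm:lower-bound-ETm}'', and your two-stage Jensen argument (first over $f_{m\lambda}$, then over $g(\cdot|n)$, with the inequality direction correctly reversed by the leading $1-{}$) is precisely that methodology spelled out. Your closing remark on tightness is also consistent with the paper's own discussion of when the bounds are tight.
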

\begin{proof}
The bound follows easily by observing that $h(x)=e^{-x\cdot TTL}$ is a convex function, and applying \emph{Jensen's inequality} and the methodology of Theorem~\ref{thm:lower-bound-ETm}.
\end{proof}

\subsection*{\textbf{Tightness of bounds}}

\red{To derive simple expressions (bounds) that depend only on the average statistics $\mu_{\lambda}$ and $\overline{g}(n)$, and thus can be easily used in real scenarios (see, e.g., Section~\ref{sec:applications}), we applied twice Jensen's inequality. Although Jensen's inequality does not come with any quantitative guarantees for the tightness of a bound, in the following, we provide some intuition about how tight our expressions are expected to be in different scenarios.}

\red{Let us consider, for example, Theorem~\ref{thm:lower-bound-ETm} (similar arguments hold for Theorem~\ref{thm:upper-bound-P-Tm-TTL}). We first apply Jensen's inequality at \eq{eq:jensen-ml} for the expectation taken over node mobility, i.e.
\begin{equation*}
E_{m\lambda}\left[\frac{1}{x}\right]\geq \frac{1}{E_{m\lambda}[x]}
\end{equation*}
The same expectation, by applying the Delta method~\cite{Oehlert1992}, can be expressed as 
\begin{align*}
E_{m\lambda}\left[\frac{1}{x}\right] 
	&=  \frac{1}{E_{m\lambda}[x]} + \frac{E_{m\lambda}[(x-E_{m\lambda}[x])^2]}{(E_{m\lambda}[x])^{2}} + \cdots \\
	&=  \frac{1}{m\cdot \mu_{\lambda}}\cdot \left(1 + \frac{CV_{\lambda}^{2}}{m} + O\left(\frac{1}{m^2}\right)\right)
\end{align*}
As it can be seen in the above equation, the expectation $E_{m\lambda}\left[\frac{1}{x}\right]$ is equal to the lower bound, given by Jensen's inequality $\frac{1}{E_{m\lambda}[x]}$, plus a corrective term that decreases as (i) the heterogeneity of the mobility distribution (i.e. the variance $\sigma_{\lambda}$ and higher order moments) decreases, and (ii) the number of holders $m$ increases. Since $m$ takes every possible availability value, it follows that the tightness of the bound depends on the \textit{minimum availability} $m_{min}$ (and how probable this value is). For instance, if $m_{min}=1$ (and there is a high probability, i.e. $g(1|n)$, that this happens), the above bound probably might not be tight.}

\red{The second time we apply Jensen's inequality is at \eq{eq:jensen-g}, for the expectation over the availability distribution $g(m|n)$. Proceeding similarly, we can show that the bound
\[E_{g}\left[\frac{1}{m}\right]\geq\frac{1}{E_{g}[m]}\]
becomes tighter when the mass of the availability distribution $g(m|n)$ is concentrated around its mean value $E_{g}[m]$ (low heterogeneity). For example, for a deterministic $g(m|n)\rightarrow \rho(n)$, the bound is exact, whereas for a uniform distribution $g(m|n)=\frac{1}{C}, \forall m\in [1,W]$, the bound will become looser as $W$ increases.}

\red{Summarizing, the tightness of the bounds of Theorems~\ref{thm:lower-bound-ETm} and~\ref{thm:upper-bound-P-Tm-TTL} becomes higher as: 
\begin{itemize}
\item the heterogeneity of the mobility distribution $f_{\lambda}(\lambda)$ decreases
\item the minimum value of the availability, i.e. $m_{min} = min\{m:g(m|n)>0\}$, increases
\item the heterogeneity of the availability distribution $g(m|n)$ decreases
\end{itemize}
}

\subsection{Extensions}\label{sec:analysis-extensions}
In this section, we study how the results of Section~\ref{sec:performance-metrics} can be modified, when we remove the Assumptions~\ref{ass:availability-time-independence} and~\ref{ass:traffic-mobility-independence}. \red{Also, we provide the corresponding performance metric expressions for a Pareto distributed inter-contact times case.} We state here only the main findings and sketches of the proofs; the detailed proofs can be found in the Appendices.


\subsection*{\red{\textbf{Time-varying Availability: Multi-hop Content~Dissemination}}}
								
\red{In many protocols for opportunistic content-centric application proposed in literature, e.g.~\cite{Gao-user-centric-DTN,CEDO,offloading-wowmom11,contentplace}, the set of holders of a content might change over time or over the content distribution process, which is in contrast to Assumption~\ref{ass:availability-time-independence}. To this end, in this section, we study such cases of varying content availability. However, due to the numerous different approaches, each of them considering different ways of content dissemination (e.g. all nodes contribute to the content distribution~\cite{offloading-wowmom11}, or only selected nodes become holders~\cite{Gao-user-centric-DTN,contentplace}), a common methodology cannot be applied. Hence, we consider the following example scenario, and provide guidelines for analyzing further cases.} 

Let us assume a scenario where, initially, some nodes hold some \textit{content items} (e.g. data files), in which some other nodes are interested. This can be, for example, a content sharing scenario with contents being, e.g., some google maps. When a node interested in a content item, meets a holder and gets the content, it can hold it in its memory and act as a holder too. Specifically, we describe such scenarios as:

\begin{definition}\label{def:Availability Time Dependence}~\\
$-$~When a requester accesses a content, acts as a holder~for~it.\\
$-$ The \emph{initial} content popularity and availability patterns are given by $P_{p}(n)$ and $g(m|n)$.
\end{definition}

\red{In scenarios conforming to Def.~\ref{def:Availability Time Dependence}, an approximation\footnote{\red{The multi-hop delivery of a content, in combination with the mobility heterogeneity, does not allow the derivation of simple, closed-form expressions for exact predictions and bounds.}} for the expected content access delay $E[T_{\mathcal{M}}]$ is given by Result~\ref{R:ATD} (the detailed proof is given in Appendix~\ref{appendix:proof-R:ATD}).}
\begin{result}\label{R:ATD}
Under a time-varying availability scenario of Def.~\ref{def:Availability Time Dependence}, the expected content access delay is approximately given by
\begin{equation}
E[T_{\mathcal{M}}]=\frac{1}{\mu_{\lambda}\cdot E_{p}[n]}\cdot E_{p}\left[\ln\left(1+\frac{n}{\overline{g}(n)}\right)\right]\nonumber
\end{equation}
\end{result}
\textit{Sketch of proof:} Let us consider a content $\mathcal{M}$ of initial popularity $N_{p}(0)=n$ and availability $N_{a}(0)=m$. When the first requester accesses the content, the number of holders will increase to $m+1$ and the remaining requesters will be $n-1$. Building a Markov Chain as in Fig.~\ref{fig:markov-chain}, where each state denotes the number of holders, it can be shown for the expected delay of moving from state $m+k$ to state $m+k+1$, $k\in[0,n-1]$, that it holds $E[T_{k,k+1}] \approx \frac{1}{(m+k)\cdot (n-k)\cdot \mu_{\lambda}} $. Computing the times $E[T_{k,k+1}]$ and averaging over all the contents, gives the expected delay.
\begin{figure}
\centering
\includegraphics[width =1\linewidth]{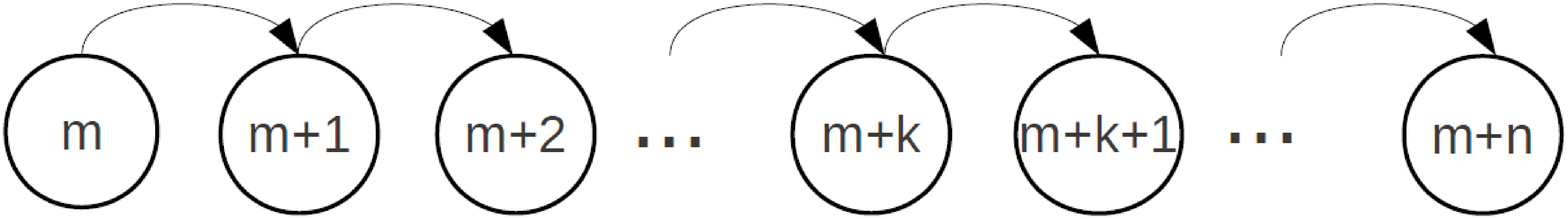}
\caption{Markov Chain for the dissemination of a content with initial popularity and availability $n$ and $m$, respectively.}
\label{fig:markov-chain}
\end{figure}

\red{Following the guidelines of the above methodology, further scenarios can be analysed as well. We provide here some examples (however, a detailed study is out of the scope of this paper):}

\red{\textit{Probabilistic cooperation.} A node receiving a content, might not be willing to cooperate and act as a holder for it (e.g. due to battery depletion, privacy concerns, etc.). To capture this, we can use the following model: a node acts as a holder for the content it receives with probability $p$. Then, at each content delivery, the number of holders increments with probability $p$ (and does not change with probability $1-p$). Building a similar Markov Chain as before, we can approximate the delay $E[T_{k,k+1}] \approx \frac{1}{(m+p\cdot k)\cdot (n-k)\cdot \mu_{\lambda}}$, where now $k\in[0,n-1]$ denotes the number of requesters that have been served. Hence, proceeding as in Appendix~\ref{appendix:proof-R:ATD}, we can get
\begin{equation}
E[T_{\mathcal{M}}]\approx\frac{1}{p\cdot \mu_{\lambda}\cdot E_{p}[n]}\cdot E_{p}\left[\ln\left(1+\frac{p\cdot n}{\overline{g}(n)}\right)\right]
\end{equation}}

\red{\textit{Limited spreading.} Let assume that the spreading of the content is limited to $L$ holders (i.e. only a limited number of $L$ \textit{new} holders is allowed), e.g., in order to reduce resource consumption in the network. In this case, the content availability will increase from $N_{a}(0) = m$ to its max value $N_{a} = m+L$ and after this point, it will not change. Then, the delay $\forall k\geq L$ (again $k$ refers to served requesters) will be given by $E[T_{k,k+1}] \approx \frac{1}{(m+L)\cdot (n-k)\cdot \mu_{\lambda}}$. Following the same steps as in  Appendix~\ref{appendix:proof-R:ATD}, we can get
\begin{equation}
~\hspace{-0.25cm}E[T_{\mathcal{M}}]\approx \frac{1}{\mu_{\lambda}\cdot E_{p}[n]}\cdot E_{p}\left[\frac{n-L}{\overline{g}(n)+L}+\ln\left(1+\frac{L}{\overline{g}(n)}\right)\right]
\end{equation}
}

\revisionRed{\underline{Remark:} It is possible in certain scenarios that content availability changes in various ways, sometimes not related (only) to the given content dissemination mechanism. For instance, holders may discard some contents due to limited resources, like full buffers, battery depletion, etc. An analysis as above could be applied for some of these cases as well (e.g., content discards could be modelled with a Markov Chain as in Fig.~\ref{fig:markov-chain}, where transitions to states with less holders are allowed). Due to space limitations, we defer the study of such interesting cases to future research.}

\subsection*{\revisionRed{\textbf{Time-varying Popularity}}}
\revisionRed{As discussed earlier, in the majority of the commonly considered applications/scenarios, users are not expected to change their interests in the time window of a content delivery; hence, content popularities do not change either. However, it is possible in certain cases that the popularity of a content might change over the (typical) time window of its delivery. In the following, we provide some initial analysis, as a first step towards analysing such cases\footnote{\revisionRed{We stress that a complete study of all the possible ways that the popularity patterns might change in an opportunistic content-centric scenario (and the respective analysis) is out of the scope of this paper.}}.}

\revisionRed{Let us assume a scenario where the initial requesters of a content start losing their interest with time. This is a common case among applications distributing news, trending video, etc. Since this loss of interest might appear in various ways (gradually, rapidly, etc.), which depend on the considered setting, we use the following generic way to model it.
\begin{definition}[Time-varying Popularity]
The probability a requester to have lost its interest by time $T$ is given by a distribution $P_{loss}\{t\leq T\}$.
\end{definition}
Under the above class of time-varying popularity cases, we can calculate the probability a content to be delivered to a requester by time $TTL$ (the expected delay is not a convenient performance metric in this case, since contents are never delivered to requesters that have lost their interest). If we denote this probability as $P_{delivery}$, it is easy to see that it holds
\begin{equation}
P_{delivery} = P\{T_{\mathcal{M}}\leq TTL \}\cdot \left(1-P_{loss}\{t\leq TTL\}\right)
\end{equation}
where $P\{T_{\mathcal{M}}\leq TTL \}$ is defined in Section~\ref{sec:performance-metrics}. $P_{delivery}$ can then be calculated straightforward from our results for $P\{T_{\mathcal{M}}\leq TTL \}$ (e.g., Result~\ref{result:P-Tm-TTL}) and the (known) distribution $P_{loss}\{t\leq T\}$.}

\revisionRed{\underline{Remark:} Further complexity can be added in the above model for users' loss of interest, like, heterogeneous distributions $P_{loss}$ (among users and/or contents), dependency between the loss of interest and the content or mobility characteristics, etc. The performance in such cases can be analysed similarly, based on our framework/methodology. }


\subsection*{\textbf{Mobility Dependent Allocation}}

\red{As discussed earlier (Section~\ref{sec:analysis}), who holds a content and who is interested in it, might be related to their mobility patterns, e.g. due to heterogeneous traffic patterns~\cite{pavlos-TMC-traffic} or a mobility-aware protocol~\cite{contentplace, multiple-offloading}. This can affect the performance in a positive or negative way, depending on the correlation between the mobility of holders and requesters. For instance, if a protocol selects as holders the nodes that meet more frequently the requesters (positive mobility correlation), then the performance is expected to be improved.}

\red{Due to the numerous different protocols and/or settings that might create such mobility correlations, we cannot analyze every single scenario separately. Hence, we choose to model the mobility dependence in a generic and probabilistic way. Then, to apply our results in a specific scenario, one needs only to make the correspondence between the mobility characteristics of the scenario and the model of Def.~\ref{def: Mobility Dependent Allocation} (e.g. following the guidelines of~\cite{pavlos-TMC-traffic}).}

\begin{definition}[Mobility Dependent Allocation]\label{def: Mobility Dependent Allocation}
The probability $\pi_{ij}$ that a node $i$ is a holder for a content in which a node $j$ is interested, is related to their contact rate $\lambda_{ij}$ such that $\pi_{ij} = \pi(\lambda_{ij})$, where $\pi(\cdot)$ is a function from $\mathbb{R}^{+}$ to $[0,1]$.
\end{definition}

\red{Based on the above definition, we can predict the performance of a content-centric application using Result~\ref{R:MDA}, which we prove in Appendix~\ref{appendix:R:MDA}}
\begin{result}\label{R:MDA}
 Under Def.~\ref{def: Mobility Dependent Allocation}, Theorems~\ref{thm:lower-bound-ETm} and~\ref{thm:upper-bound-P-Tm-TTL} and Result~\ref{R:ATD} hold if we replace $\mu_{\lambda}$ with $\mu_{\lambda}^{(\pi)}$, where
\begin{equation}
 \mu_{\lambda}^{(\pi)} = \frac{E_{\lambda}[\lambda\cdot \pi(\lambda)]}{E_{\lambda}[\pi(\lambda)]}\nonumber
\end{equation}
where $E_{\lambda}[\cdot]$ denotes an expectation taken over the contact rates distribution $f_{\lambda}(\lambda)$ (Assumption~\ref{ass:heterogeneous-mobility}).
\end{result}
\textit{Sketch of proof:} Since the requesters-holders contact rates are mobility dependent, the contact rates between them are not distributed with the contact rates distribution $f_{\lambda}(\lambda)$, but with a modified version of it, i.e. with a distribution:
\begin{equation}
 f_{\pi}(\lambda) = \frac{1}{E_{\lambda}[\pi(\lambda)]}\cdot \pi(\lambda)\cdot f_{\lambda}(\lambda)\nonumber
\end{equation}
Hence, \eq{eq:definition-fml} and \eq{eq:mean-ml-x} need to be modified as:
\begin{align}
&\XM\sim f_{m\pi}(x) = \left(f_{\pi}\ast f_{\pi} \cdots \ast f_{\pi} \right)_{m}\nonumber\\
&E[\XM|N_{a}=m] = E_{m\pi}[x] = m\cdot \frac{E_{\lambda}[\lambda\cdot \pi(\lambda)]}{E_{\lambda}[\pi(\lambda)]}= m\cdot \mu_{\lambda}^{(\pi)}\nonumber
\end{align}

\textit{\underline{Example Scenario:}} The holders of a content $\mathcal{M}$ are selected taking into account their contact rates with the requesters, as following: Each node $i$ (candidate to be a holder) is assigned a weight $w_{i} = \prod_{j\in\Cp}\lambda_{ij}$. Using such weights, the selection of holders that rarely meet the requesters is avoided. Then, each node is selected to be one of the $\Na$ holders with probability $p_{i} = \frac{w_{i}}{\sum_{i}w_{i}}$. With respect to Def.~\ref{def: Mobility Dependent Allocation}, it turns out that this mechanism is (approximately) described by $\pi(\lambda) = c\cdot \lambda$. Substituting $\pi(\lambda)$ in Result~\ref{R:MDA}, gives
\begin{equation}
 \mu_{\lambda}^{(\pi)} = \frac{E_{\lambda}[\lambda\cdot \pi(\lambda)]}{E_{\lambda}[\pi(\lambda)]}=\frac{E_{\lambda}[\lambda^{2}]}{E_{\lambda}[\lambda]} = \mu_{\lambda}\cdot(1+CV_{\lambda}^{2})
\end{equation}

\subsection*{\textbf{Pareto Inter-Contact Times}}

\red{We now proceed and demonstrate how our model can be extended to cases where inter-contact times between nodes are not exponentially distributed. Specifically, we consider inter-contact times following a Pareto distribution, which has been shown to fit some real traces~\cite{chaintreau-tmc}.} 

\red{Let us assume that inter-contact times between a node $j$ interested in a (random) content $\mathcal{M}$ and a node $i\in\mathcal{C}_{a}$ are Pareto distributed with \textit{shape} and \textit{scale} parameters $\alpha_{ij}+1$ (with $\alpha_{ij}>0$ when $E[T_{ij}]<+\infty$) and $t_{0}$, respectively\footnote{\red{We use the American Pareto (or Pareto Type II) distribution, which is supported for $t\geq0$~\cite{Boldrini-Residuals}. Moreover, for simplicity we assume a common scale parameter $t_{0}$ among all node pairs. Our results can be generalized for different scale parameters, e.g. $t_{0}^{(i,j)}$ for a pair $\{i,j\}$, however, this would increase the complexity of notation and expressions.}}:
\begin{equation*}
\textstyle T_{ij} \sim pareto(\alpha_{ij}+1, t_{0}) \Leftrightarrow P\lbrace T_{ij}>t\rbrace = \left(\frac{t_{0}}{t_{0}+t}\right)^{\alpha_{ij}+1}
\end{equation*}
Then, it follows that the residual inter-contact times will be also Pareto distributed, but with a decreased shape parameter~\cite{Boldrini-Residuals}, i.e.
\begin{equation*}
\textstyle T_{ij}^{(r)} \sim pareto(\alpha_{ij}, t_{0}) \Leftrightarrow P\lbrace T_{ij}^{(r)}>t\rbrace = \left(\frac{t_{0}}{t_{0}+t}\right)^{\alpha_{ij}}
\end{equation*}
and it can be shown for $T_{\mathcal{M}} = \min_{i\in\mathcal{C}_{a}}\lbrace T_{ij}^{(r)}\rbrace$ that (Appendix~\ref{appendix:min-pareto}):
\begin{equation*}
\textstyle T_{\mathcal{M}} \sim pareto(A_{\mathcal{M}}, t_{0}) \Leftrightarrow P\lbrace T_{\mathcal{M}}>t\rbrace = \left(\frac{t_{0}}{t_{0}+t}\right)^{A_{\mathcal{M}}}
\end{equation*}
where $A_{\mathcal{M}} = \sum_{i\in\mathcal{C}_{a}}\alpha_{ij}$.}

\emph{Remark:} \red{In this case the contact rates (Def.~\ref{ass:heterogeneous-mobility}) will be $\lambda_{ij} = \frac{1}{E[T_{ij}]} = \frac{\alpha_{ij}}{t_{0}}$, $\alpha_{ij}>0$}. However, for simplicity, we can use the parameters $\alpha_{ij}$ instead of the rates $\lambda_{ij}$, and, correspondingly, a distribution $f_{\alpha}(\alpha)$, instead of $f_{\lambda}(\lambda)$.

Hence, similarly to \eq{eq:definition-fml} and \eq{eq:mean-ml-x}, for Pareto intervals ($f_{a}(\alpha)$, $\mu_{\alpha}$), we can write:
\begin{equation}
A_{\mathcal{M}}\sim f_{m\alpha}(x) = \left(f_{\alpha}\ast\cdots \ast f_{\alpha} \right)_{m}, ~~E_{m\alpha}[x] = m\cdot \mu_{\alpha}\nonumber
\end{equation}

\red{Having calculated the above quantities, we can now proceed similarly to the exponential case (Section~\ref{sec:performance-metrics}) and derive the expressions for the performance metrics in the Pareto case (i.e. expressions corresponding to Results~\ref{result:ETm} and~\ref{result:P-Tm-TTL}, and Theorems~\ref{thm:lower-bound-ETm} and ~\ref{thm:upper-bound-P-Tm-TTL}). The expressions are given in Table~\ref{table:pareto-expressions} and the detailed derivations can be found in Appendix~\ref{appendix:pareto-expressions-proofs}.}

\begin{table*}
\centering
\caption{Performance metrics for Pareto distributed Inter-Contact times}
\label{table:pareto-expressions}
\begin{footnotesize}
\begin{tabular}{|l||c|c|}
\hline
{}&{Exact expressions}&{Bounds}\\
\hline
{}&{}&{}\\
{$E[T_{\mathcal{M}}]$}
&
$\displaystyle \frac{t_{0}}{E_{p}[n]}\cdot E_{p}\left[n\cdot \sum_{m} E_{m\alpha}\left[\frac{1}{x-1}\right]\cdot g(m|n)\right]$
&
$\displaystyle\frac{t_{0}}{E_{p}[n]}\cdot E_{p}\left[\frac{n}{\overline{g}(n)\cdot\mu_{\alpha}-1}\right]$
\\
{}&{}&{}\\
\hline
{}&{}&{}\\
{$P\{T_{\mathcal{M}}\leq TTL\}$}
&
$\displaystyle 1-\frac{1}{E_{p}[n]}\cdot E_{p}\left[n\cdot\sum_{m} E_{m\alpha}\left[\left(\frac{t_{0}}{t_{0}+TTL}\right)^{x}\right]\cdot g(m|n)\right]$
& 
$\displaystyle 1-\frac{1}{E_{p}[n]}\cdot E_{p}\left[n\cdot \left(\frac{t_{0}}{t_{0}+TTL}\right)^{\overline{g}(n)\cdot\mu_{\alpha}}\right]$\\
{}&{}&{}\\
\hline
\end{tabular}
\end{footnotesize}
\end{table*}

\subsection{Model Validation}\label{sec:synthetic-simulations}

As a first validation step, we compare our theoretical predictions to synthetic simulation scenarios conforming to the models of Section~\ref{sec:network-model}, in order to consider (a) various mobility and content traffic patterns, and (b) large networks. 

\textit{\textbf{Simulation Scenarios:}} We assign to each pair $\{i,j\}$ a contact rate $\lambda_{ij}$, which we draw randomly from a distribution $f_{\lambda}(\lambda)$, and create a sequence of contact events (Poisson process with rate $\lambda_{ij}$). Then, we create $M$ contents and assign to each of them a popularity value ($N_{p}$), drawn from the distribution $P_{p}(n)$. According to the given function $g(m|n)$, we assign the availability values ($N_{a}$). Finally, for each content $\mathcal{M}$, we randomly choose the $\Np$ nodes that are interested in it and its $\Na$ holders. 

\begin{figure}
\centering
\subfigure[{$E[T_{\mathcal{M}}]$}]{\includegraphics[width=0.49\linewidth]{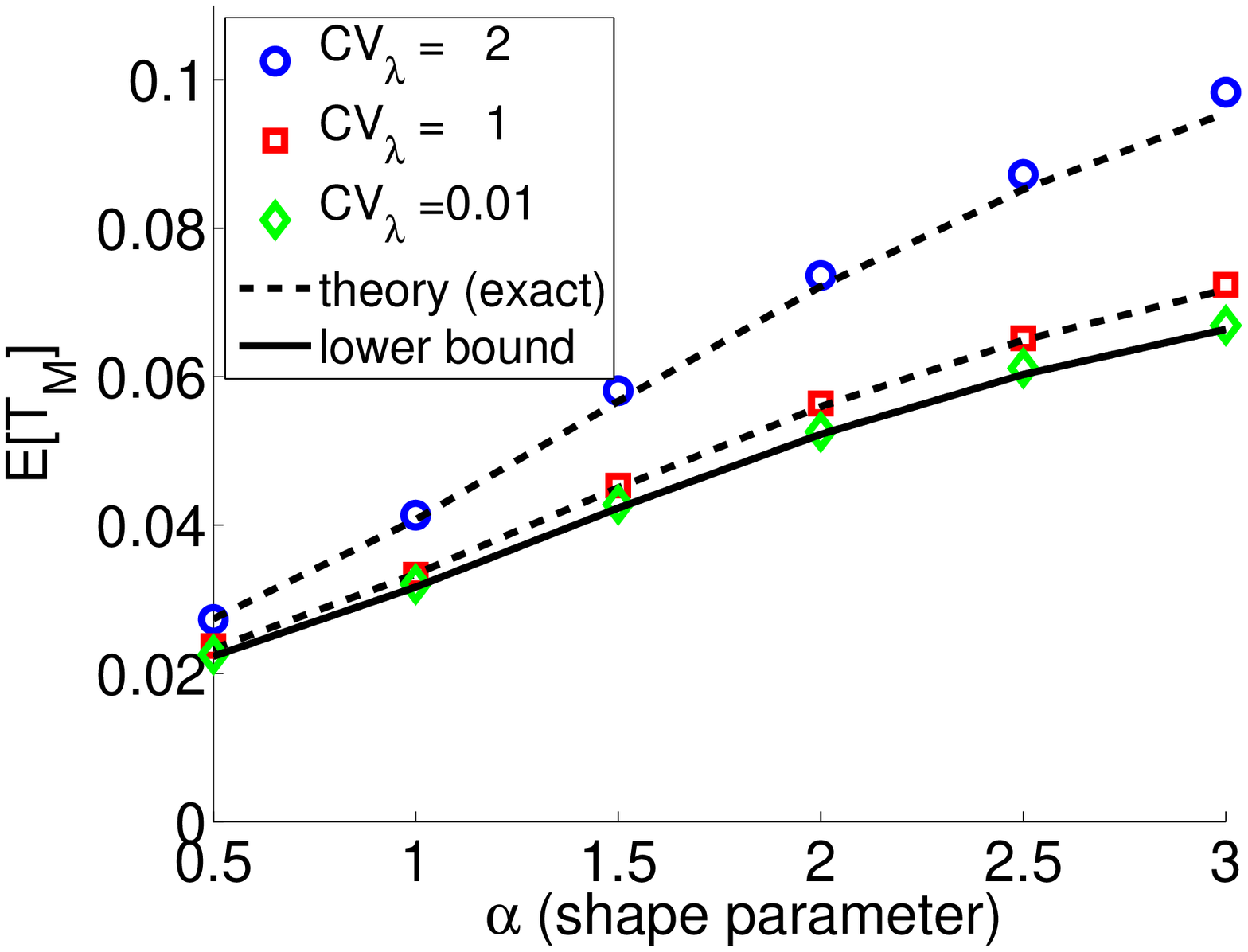}\label{fig:synth-ETm-gLinear}}
\subfigure[$P\{T_{\mathcal{M}}\leq TTL\}$]{\includegraphics[width=0.49\linewidth]{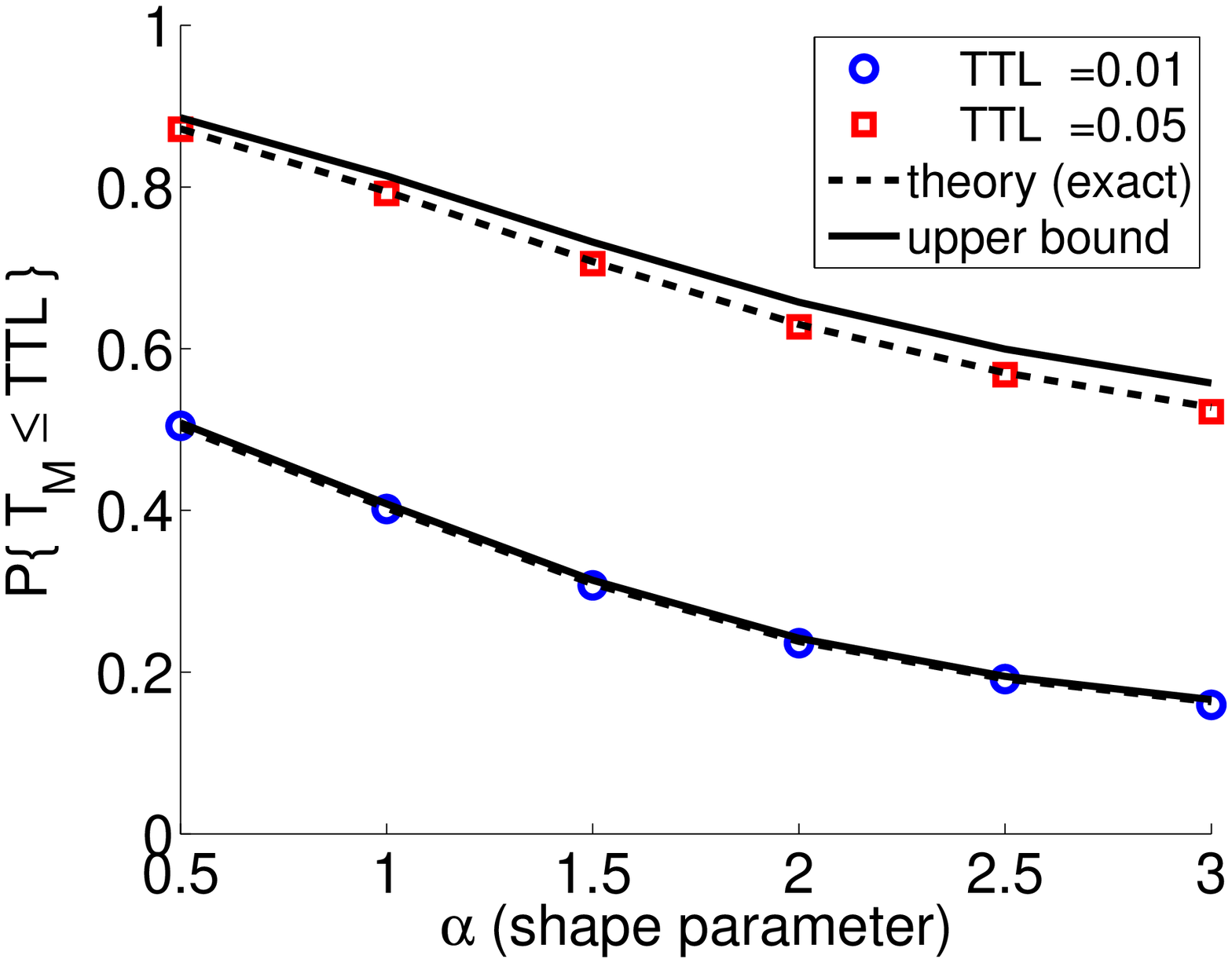}\label{fig:synth-P-gLinear-cv2}}
\caption{(a) $E[T_{\mathcal{M}}]$ and (b) $P\{T_{\mathcal{M}}\leq TTL\}$ in scenarios with varying content popularity ($\alpha$: shape parameter).
}
\label{fig:synth-gLinear}
\end{figure}

\textit{\textbf{Mobility / Popularity patterns:}} In most of the scenarios we present, we use the Gamma distribution for the contact rates (i.e. $f_{\lambda}(\lambda)$), since it has been shown to match well characteristics of real contact patterns~\cite{Passarella-aggregateIT}. Also, content popularity in mobile social networks has been shown to follow a power-law distribution, e.g.~\cite{youtube-traffic-from-edge, RSS-traffic-characteristics, pavlos-dataset-AOC}. Therefore, we select $P_{p}(n)$ to follow Discrete (Bounded) Pareto or Zipf distributions, similarly to the majority of related works~\cite{Gao-user-centric-DTN,multiple-offloading,CEDO}.

In Fig.~\ref{fig:synth-gLinear} we present the simulation results, along with our theoretical predictions, in scenarios of $N=10000$ nodes with varying mobility and content popularity patterns. The mean contact rate is $\mu_{\lambda}=1$ and content popularity follows a Bounded Pareto distribution with shape parameter (i.e. exponent) $\alpha$ and $n\in[50, 1000]$. The availability function is $\rho(n)=0.2\cdot n$ (i.e. deterministic). An almost perfect match between simulation results (markers) and the theoretical predictions (dashed lines) of Results~\ref{result:ETm} and~\ref{result:P-Tm-TTL} can be observed. In Fig.~\ref{fig:synth-ETm-gLinear}, the lower bound (continuous line) of Theorem~\ref{thm:lower-bound-ETm} is very tight for low mobility (i.e. small $CV_{\lambda}$) and/or content popularity (i.e. small $\alpha$) heterogeneity\red{, confirming thus the discussion of Section~\ref{sec:performance-metrics} for the bound tightness}. For the delivery probability $P\lbrace T_{\mathcal{M}}\leq TTL\rbrace$ (Fig.~\ref{fig:synth-P-gLinear-cv2}), we present the results for two different values of $TTL$ in scenarios with $CV_{\lambda}=2$ \red{(i.e. the most heterogeneous scenario)}. Here, the upper bound (continuous line) of Theorem~\ref{thm:upper-bound-P-Tm-TTL} is very close to the simulation results, despite the very heterogeneous mobility.

In Table~\ref{Table:synthetic-simulation-results-binomial} we present results of the above scenarios, where the availability - popularity correlation is not deterministic. We assume that $g(m|n)$ follows a binomial distribution with mean $\overline{g}(n) = 0.2\cdot n$. \red{The binomial distribution introduces a randomness that can be interpreted as noise in a system's availability estimation algorithm, differences in node behaviors (e.g. a node having a resource, shares it with probability $p$), etc.} It can be seen that the bounds are tight in most of the scenarios, though (as expected - cf. Section~\ref{sec:performance-metrics}) less tight than in the deterministic $g(m|n)$ case (i.e. $\rho(n)$).

\begin{table}[!h]
\centering
\caption{Simulation results for scenarios where $g(m|n)\sim Binomial$ with $\overline{g}(n) = 0.2\cdot n$, and $TTL=0.05$.}
\begin{footnotesize}
 \begin{tabular}{|l|cccc|}
\hline
{$E[T_{\mathcal{M}}]$ ($x 10^{3}$)}
			&   $\alpha = 0.5$  &   $\alpha = 1$  &   $\alpha = 2$  &   $\alpha = 3$    \\
\hline
lower bound             &   22.3    	    &	31.6	      &		52.2  	& 	 66.4     \\
simulation ($CV_{\lambda}=0.5$) &   23.9    &	34.8          &		57.3  	&  	 75.0     \\
simulation ($CV_{\lambda}=1$)   &   25.0    &	36.2          &		61.9  	&  	 81.4     \\
\hline
{$P\{T_{\mathcal{M}}\leq TTL\}$}
			&   $\alpha = 0.5$  &   $\alpha = 1$  &   $\alpha = 2$  &   $\alpha = 3$    \\
\hline
upper bound             &   0.89    	    &	0.81	      &		0.66  	& 	 0.56     \\
simulation ($CV_{\lambda}=2$)&   0.87 	    &	0.79          &		0.62  	&  	 0.52     \\
\hline
 \end{tabular}
\label{Table:synthetic-simulation-results-binomial}
\end{footnotesize}
\end{table}

\red{Finally, Table~\ref{Table:relative-error-vsNetworkSize} shows the accuracy of our results in smaller network size scenarios with $f_{\lambda}\sim Gamma(\mu_{\lambda}=1,CV_{\lambda}=1)$ and $P_{p}(n)\sim BoundedPareto(\alpha=2)$ and $n\in[50,500]$. We present the relative errors between the simulation values and the predictions of Results~\ref{result:ETm} and~\ref{result:P-Tm-TTL}. It can be seen that as the network size decreases the error increases; however, the accuracy is significant for all cases (max error $\approx 5\%$).
}
\begin{table}[!h]
\centering
\caption{Relative error between simulation results and Results~\ref{result:ETm} and~\ref{result:P-Tm-TTL} for various network size scenarios.}
 \begin{tabular}{|l|cccc|}
\hline
$N$		& 500 	&1000 	&1500	&2000\\
\hline
rel. error, $E[T_{\mathcal{M}}]$ &  4.98\% & 1.79\% & 1.25\% & 1.08\% \\
rel. error, $P\{T_{\mathcal{M}}\leq TTL\}$ &  5.24\% & 1.27\% & 1.03\% & 0.77\%\\
\hline
 \end{tabular}
\label{Table:relative-error-vsNetworkSize}
\end{table}

\red{We, now, proceed in the validation of the extensions of our basic results presented in Section~\ref{sec:analysis-extensions}.} First, in Fig.~\ref{fig:synth-gLinear-extensions-Time} we compare Result~\ref{R:ATD} with simulations on scenarios conforming to Def.~\ref{def:Availability Time Dependence}: $P_{p}(n)$ is a \textit{Bounded Pareto} distribution with $\alpha=2$, and $f_{\lambda}(\lambda)\sim$ \textit{Pareto}, \red{which can be a reasonable choice for opportunistic networks~\cite{Passarella-aggregateIT}\footnote{\red{We use a Pareto, instead of a Gamma, distribution, in order to be able to achieve high $CV_{\lambda}$ values without having to decrease the $min\{\lambda_{ij}\}$ value.}}. Since Result~\ref{R:ATD} is based on an approximation that is more accurate for less heterogeneous mobility patterns, we compare our predictions with simulations in scenarios with varying $CV_{\lambda}$.} It can be seen that our theoretical prediction (approximation) achieves good accuracy even in these very heterogeneous mobility scenarios.

Results for scenarios with mobility-dependent availability (Def.~\ref{def: Mobility Dependent Allocation}) are presented in Fig.~\ref{fig:synth-gLinear-extensions-Mobility}. $P_{p}(n)$ is selected as before and $f_{\lambda}(\lambda)\sim$ \textit{Gamma} with $\mu_{\lambda}=1,CV_{\lambda}=0.5$. The allocation of holders is made as in the example scenario of Section~\ref{sec:analysis-extensions}. The upper bounds of Result~\ref{R:MDA} are tight in all scenarios, similarly to the case without mobility dependence (Fig.~\ref{fig:synth-P-gLinear-cv2}).

\begin{figure}
\centering
\subfigure[{$E[T_{\mathcal{M}}]$}]{\includegraphics[width=0.49\linewidth]{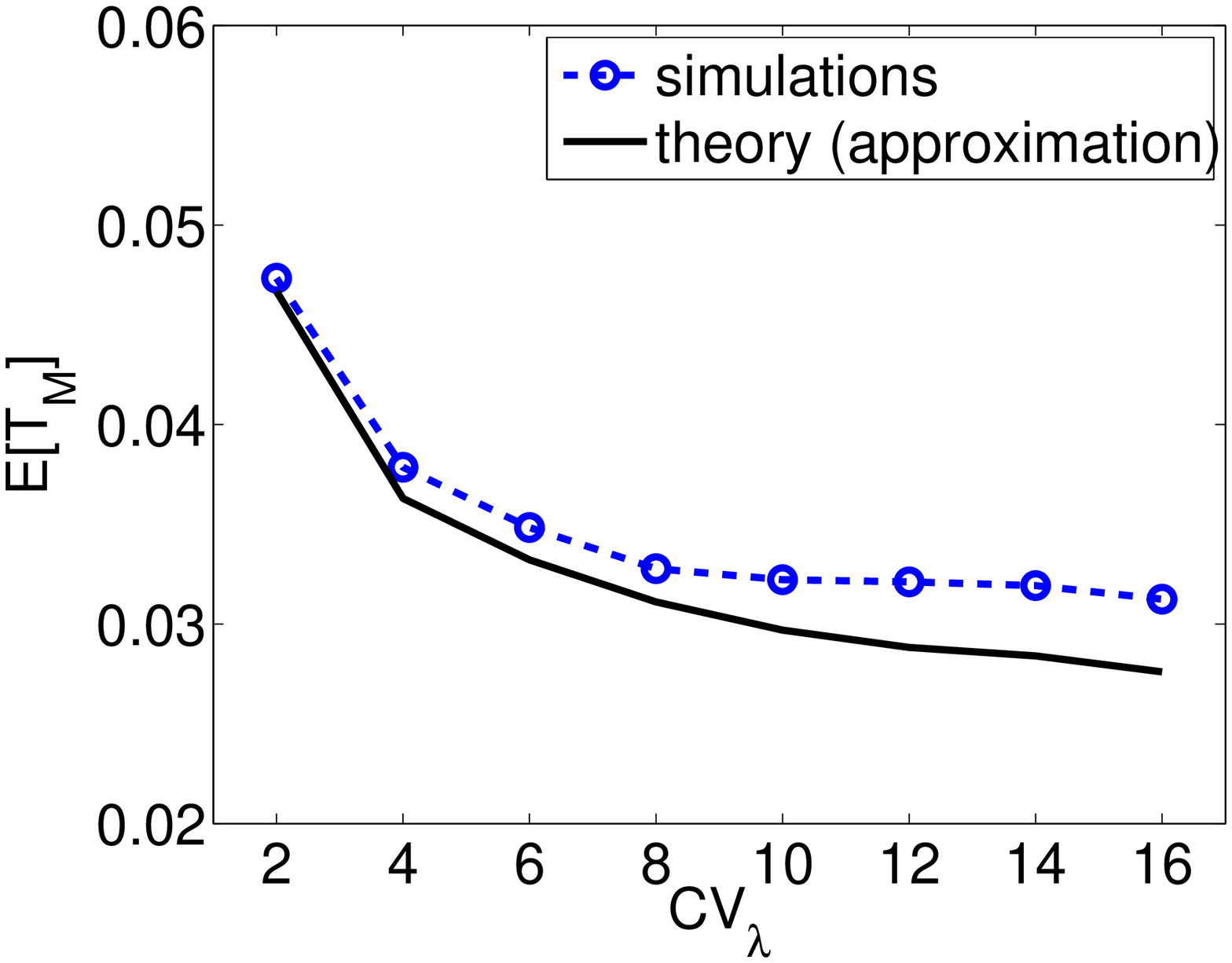}\label{fig:synth-gLinear-extensions-Time}}
\subfigure[$P\{T_{\mathcal{M}}\leq TTL\}$]{\includegraphics[width=0.49\linewidth]{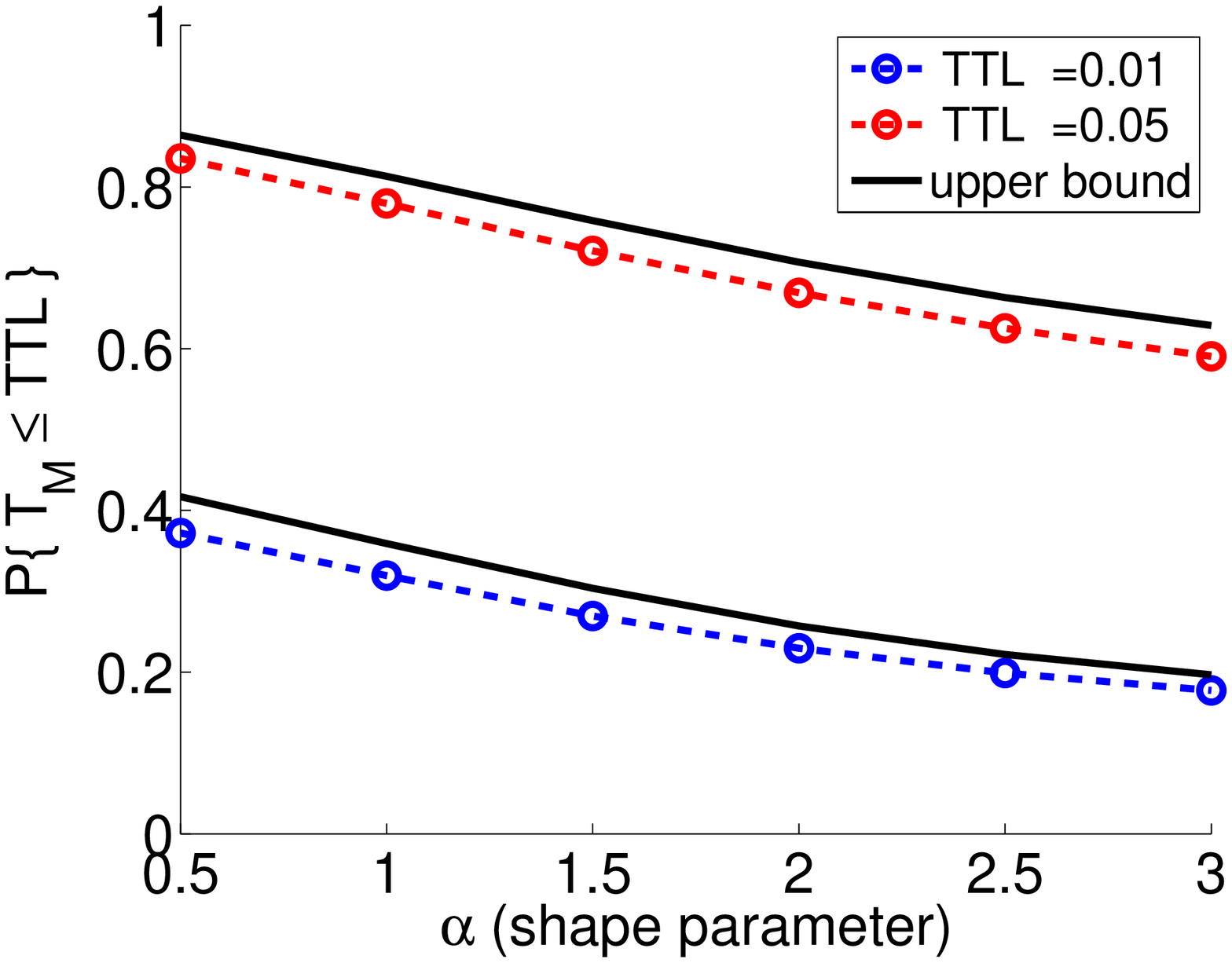}\label{fig:synth-gLinear-extensions-Mobility}}
\caption{(a) $E[T_{\mathcal{M}}]$ in scenarios under Def.~\ref{def:Availability Time Dependence} and (b) $P\{T_{\mathcal{M}}\leq TTL\}$ in scenarios under Def.~\ref{def: Mobility Dependent Allocation}. $\rho(n) = 0.2\cdot n$.
}
\label{fig:synth-gLinear-extensions}
\end{figure}

\red{Finally, we simulate scenarios with Pareto distributed inter-contact times, as assumed in Section~\ref{sec:analysis-extensions}. We consider two scenarios with $t_{0}=1$ and shape parameters $\alpha_{ij}$ uniformly distributed in the intervals $[1.5, 4]$ and $[1.5, 6]$, respectively. We present the simulation results, along with the theoretical bounds of Table~\ref{table:pareto-expressions} in Fig.~\ref{fig:paretoICT}. As it can be seen the bounds are tight in all cases. The accuracy of the exact predictions of Table~\ref{table:pareto-expressions} is significant as well (it is not shown in the figure).}
\begin{figure}
\centering
\subfigure[{$E[T_{\mathcal{M}}]$}]{\includegraphics[width=0.49\linewidth]{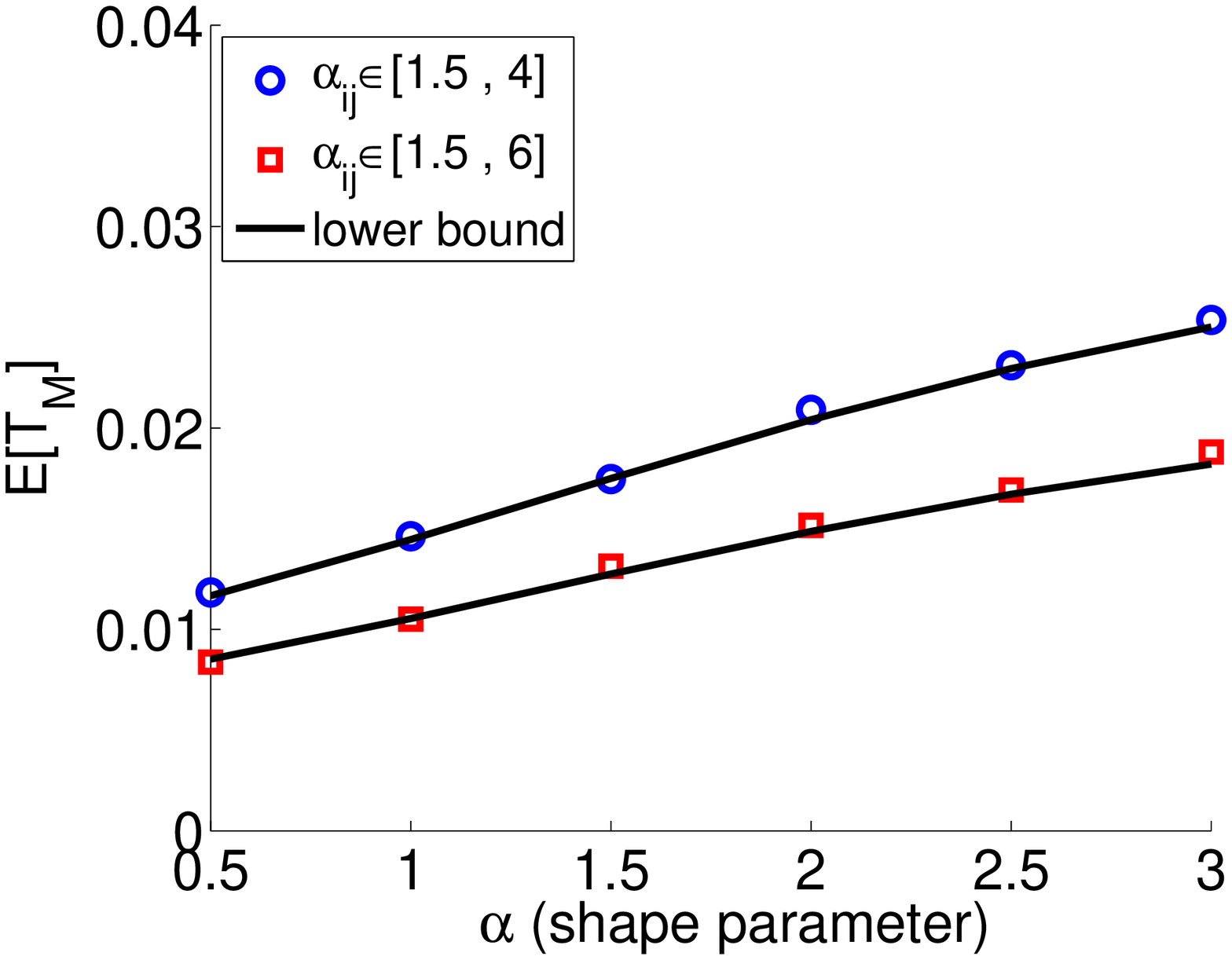}\label{fig:paretoICT-delay}}
\subfigure[$P\{T_{\mathcal{M}}\leq TTL\}$]{\includegraphics[width=0.49\linewidth]{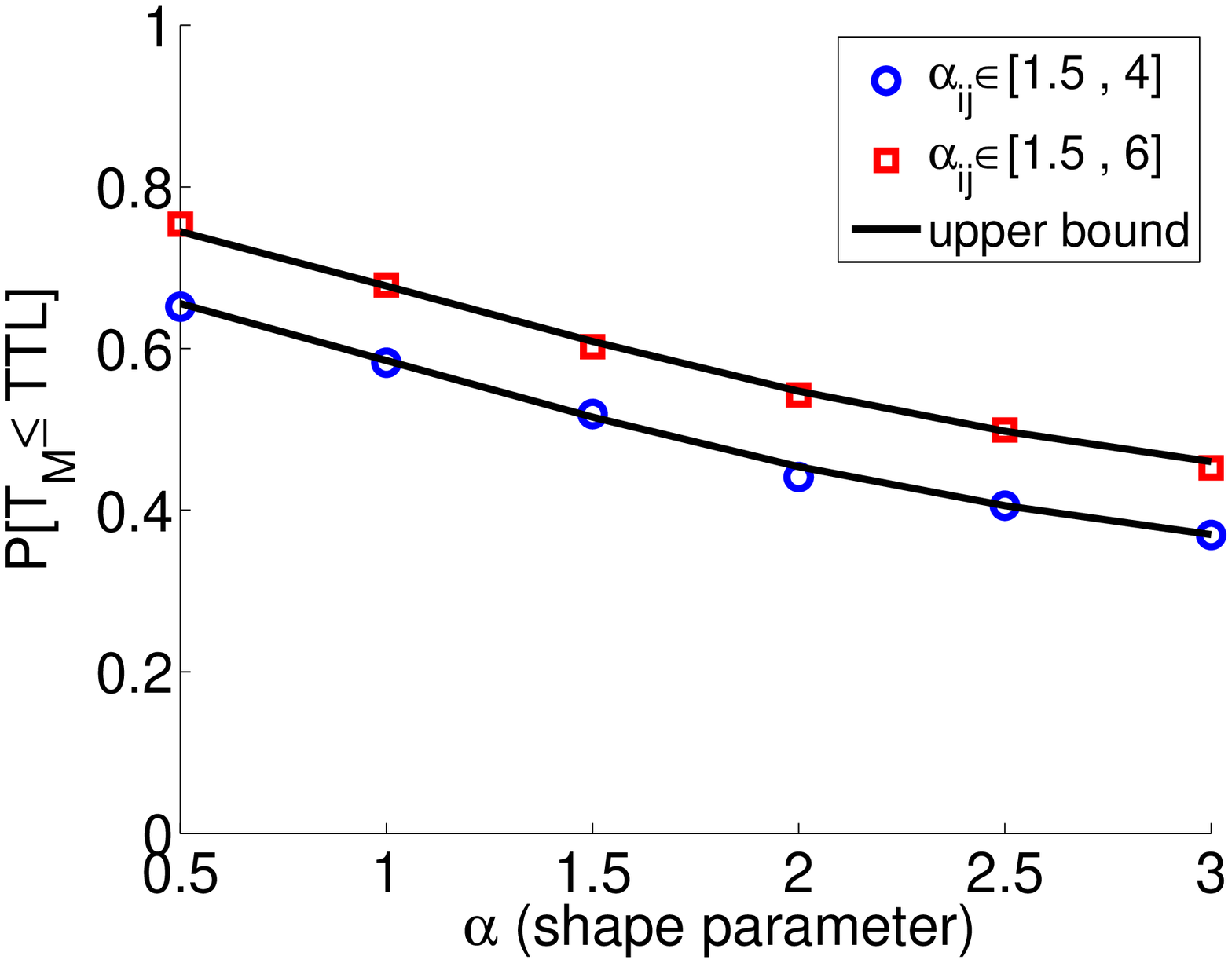}\label{fig:paretoICT-probability}}
\caption{Simulation results and theoretical bounds (Table~\ref{table:pareto-expressions}) (a) $E[T_{\mathcal{M}}]$ and (b) $P\{T_{\mathcal{M}}\leq TTL\}$ in scenarios with Pareto Inter-Contact times. $P_{p}\sim Pareto(\alpha)$ with $n\in[50,100]$, and $\rho(n) = 0.2\cdot n$.}
\label{fig:paretoICT}
\end{figure}

\section{Case Study: Mobile Data Offloading}\label{sec:applications}
The results of Section~\ref{sec:analysis} can be used to predict the performance of a given content allocation policy or content-sharing scheme. In this section, we show how these results could be also used to design / optimize policies. We focus on an application that has recently attracted attention, that of \textit{mobile data offloading} using opportunistic networking~\cite{offloading-wowmom11,Hui-Offloading, multiple-offloading}. Nevertheless, the same methodology applies for a range of other applications where the number of content/service providers must be chosen.    

\red{In a mobile data offloading scenario, the goal of the cellular network provider is to reduce the traffic served by the infrastructure. To achieve this, the cellular network, instead of transmitting separately a content to every node interested in it, distributes content copies only to some of the interested nodes (holders). The remaining (interested) nodes must then retrieve the content from the designated holders during direct encounters. In some cases, an additional \textit{QoS} constraint might exist: if the delay to access a content exceeds a $TTL$, a requesting node will download it from the infrastructure~\cite{offloading-wowmom11,Hui-Offloading, multiple-offloading}.}

\red{A tradeoff is involved between the amount of traffic offloaded and the average delay for non-holders: transmitting the content to less holders, increases the traffic that is offloaded, but also increases the time needed by a node to encounter a holder and get the content. Similar tradeoffs (between the amount of offloaded traffic and $P\{T_{\mathcal{M}}\leq TTL\}$) appear in the QoS case as well. Hence, the cellular provider has to find a point in this tradeoff -by selecting the set (or number) of the holders- that satisfies both its need to alleviate the infrastructure and the users' demands (e.g. low delivery delay). Moreover, when many different contents have to be offloaded, the number of holders that can be allocated for each of them might be constrained. The reason for this can be related to the fact that (a) the total number of cellular transmissions (which is equal to the total number of holders for all the messages) is limited due to the congestion of the wireless interface, and/or (b) nodes have limited resources, like energy or buffer size, and thus they cannot store and forward many contents.}

\red{Algorithm~\ref{alg:offloading} summarizes the main functions of a mobile data offloading system as described above.}

\red{\textit{Remark:} Here, we would like to remind the reader that we study mobile data offloading as an example showing how our model and analysis can be applied; describing in detail how to design a system implementing Algorithm~\ref{alg:offloading} is out of the scope of the paper.}
\begin{algorithm}[h]
\caption{Mobile Data Offloading}
\begin{algorithmic}[1]
\State \textbf{Input:} \textit{contents}, \textit{mobility}, \textit{constraints}
\State $\mathcal{H}\gets $ select\_holders(\textit{contents}, \textit{mobility}, \textit{constraints});
\State send\_copies\_to\_holders(\textit{contents},$\mathcal{H}$);
\State opportunistic\_content\_delivery();
\If {$QoS$}
    \State transmit\_undelivered\_contents(\textit{contents}, $TTL$);
\EndIf
\end{algorithmic}
\label{alg:offloading}
\end{algorithm}

\red{The input needed by the cellular network consists of: (i) The set of the contents $\textbf{M}$, which is already known, since nodes request the contents from the cellular network\footnote{\red{In an alternative scenario it could happen that the contents are not known \textit{a priori} and the cellular network pre-caches some contents to avoid future requests. In this case, although the exact set of the nodes interested in each content is not known, estimations (e.g., based on regular patterns, past data, or prediction methods~\cite{thesis-popularity-prediction}) about the intensity of requests (i.e. number of contents $M$) and the popularity would suffice (as can be seen, e.g., in Result~\ref{result:optimal-delay}).}}; (ii) the mobility patterns of the nodes (or only some average statistics of them, as we show later), which can be estimated, e.g. by past data~\cite{Hui-Offloading}; and (iii) the constraint on the total number of holders, which can be calculated by directly obtaining information from the nodes or it is a parameter controlled by the system, in the cases that it corresponds to node resources or the max number of transmissions, respectively.} 

\red{The next step (line 2 in Algorithm~\ref{alg:offloading}), which is the main focus of this section, is to \textit{choose the set of holders for each content}. The cellular network provider tries to find the allocation that optimizes a performance metric, under the given set of contents, the node mobility and the popularity distribution. Then, the selected holders receive the contents from the cellular network (line 3) and forward them to other interested nodes they encounter (line 4). Finally, if a $QoS$ constraint exists, the nodes that have not received the contents by time $TTL$, get them from the cellular network (lines 5-7).} 

\red{As said earlier, in this section we try to optimally allocate holders for a mobile data offloading scenario.} We study cases with and without $QoS$ constraints in Sections~\ref{sec:case-study-ETm} and~\ref{sec:case-study-Roff}, respectively. For simplicity, we use the expressions of Theorems~\ref{thm:lower-bound-ETm} and~\ref{thm:upper-bound-P-Tm-TTL} as approximations for $E[T_{\mathcal{M}}]$ and $P\{T_{\mathcal{M}}\leq TTL\}$. Since, these expressions imply that (a) the exact mobility patterns are not known (i.e. only $\mu_{\lambda}$ is needed) and (b) contents with the same popularity are equivalent, our goal is to select the number of holders for each content with a given popularity. In other words, we try to find the optimal allocation function $g(m|n)$.

\subsection{Case 1: no QoS constraints}\label{sec:case-study-ETm}
When no $QoS$ constraints exist, the cellular operator decides the maximum amount of traffic that it wishes to serve directly over the infrastructure. Under this constraint, which can be translated as a constraint on the number of holders for different contents, the objective is to minimize the expected delay $E[T_{\mathcal{M}}]$. The following result (proved in Appendix~\ref{appendix:proof-result:optimal-delay}), formalizes this optimization problem and provides with the optimal solution for $g(m|n)$.

\begin{result}\label{result:optimal-delay}
The minimum expected content access delay, under the constraint of an average number of $c_{\mathcal{M}}$ copies per content, i.e.:
\begin{equation}
 \min\{E[T_{\mathcal{M}}]\}~~~~s.t.~~~\sum_{\mathcal{M}} \Na= M\cdot c_{\mathcal{M}}~,~\Na\geq0\nonumber
\end{equation}
can be achieved when the allocation function, $g(m|n)$, is deterministic and equal to
\begin{equation}\label{eq:optimal-g-delay}
 \rho^{*}(n) = \frac{c_{\mathcal{M}}}{E_{p}[\sqrt{n}]}\cdot \sqrt{n}\nonumber
\end{equation}
\end{result}

Result~\ref{result:optimal-delay} is a generic result, since it holds under \textit{any} content popularity pattern. We also note that an allocation policy of $\rho(n) \propto \sqrt{n}$ has also been shown to achieve optimal results in (conventional) peer-to-peer networks~\cite{replication-p2p}. This is an interesting finding, given the inherent differences between the two settings (e.g. node mobility).

Finally, our result is also consistent in scenarios with \textit{mobility dependent holders allocation}. For example, after choosing the number of copies for a content (Result~\ref{result:optimal-delay}), the selection of holders can be made, taking into account mobility utility metrics, e.g. meeting frequency~\cite{contentplace} or node centrality~\cite{Gao-user-centric-DTN}.

\subsection{Case 2: QoS constraints}\label{sec:case-study-Roff}
In cases where a maximum delay $TTL$ is required, the objective is to minimize the traffic load served by the infrastructure. The metric used in related work, e.g.~\cite{multiple-offloading}, is the data offloading ratio, $R_{off.}$, which is defined as the percentage of content requests that are served by nodes. Since requests are served by the infrastructure only after the time $TTL$ elapses, it follows that in our framework: $ R_{off.} = P\lbrace T_{\mathcal{M}} \leq TTL\rbrace$.

Hence the optimization problem is equivalent to
\begin{equation}
 \max P\lbrace T_{\mathcal{M}} \leq TTL\rbrace~~s.t.~~\sum_{\mathcal{M}} \Na= M\cdot c_{\mathcal{M}},~\Na\geq0\nonumber
\end{equation}


Proceeding similarly to the proof of Result~\ref{result:optimal-delay} (see Appendix~\ref{appendix:proof-result:optimal-delay}), the above optimization problem becomes:
\begin{equation}
 \min_{\rho(n)}\lbrace E_{p}\left[n\cdot e^{-\rho(n)\cdot\mu_{\lambda}\cdot TTL}\right]\rbrace ~~~~s.t.~~~E_{p}[\rho(n)]= c_{\mathcal{M}}
\end{equation}
with $\rho(n)\geq0$, or, equivalently (by expressing the expectation as a sum and denoting $\rho_{n} = \rho(n)$):
\begin{align}\label{eq:optimization-problem-Roff-final-expression}
& \textstyle\min_{\left\{\rho_{1},...,{\rho_{n}}\right\}}~~\lbrace\sum_{n}n\cdot e^{-\rho_{n}\cdot\mu_{\lambda}\cdot TTL}\cdot P_{p}(n)\rbrace\nonumber\\ 
&\textstyle~~~~~~~s.t.~~~\sum_{n}\rho_{n}\cdot P_{p}(n)= c_{\mathcal{M}}~,~~\rho_{n}\geq0
\end{align}
The optimization problem of \eq{eq:optimization-problem-Roff-final-expression} is convex. Although a closed form solution, as in Result~\ref{result:optimal-delay}, cannot be derived, it can be solved numerically, using well known methods.

\subsection{Performance Evaluation}\label{sec:offloading-evaluation}
To investigate whether the policies suggested as optimal by our theory indeed perform better, we conducted simulations on various synthetic scenarios and on traces of real networks, where node mobility patterns usually involve much more complex characteristics than our model (Assumption~\ref{ass:heterogeneous-mobility}).

The results in the majority of scenarios considered have been encouragingly consistent with our theoretical predictions. Hence, we only present here a small, representative sample. Specifically, we consider the following traces coming from state-of-the-art mobility models or collected in experiments.

\noindent\textit{\textbf{TVCM}} mobility model~\cite{tvcm}: Scenario with $100$ nodes divided in $4$ communities of unequal size. Nodes move mainly inside their community and leave it for a few short periods.\\
\textit{\textbf{SLAW}} mobility model~\cite{slaw}: Network with $200$ nodes moving in a square area of $2000m$ (the other parameters are set as in the source code provided in~\cite{slaw}).\\
\textit{\textbf{Cabspotting}} trace~\cite{cabspotting-trace}: GPS coordinates from $536$ taxi cabs collected over 30 days in San Francisco. A range of $100m$ is assumed.\\
\textit{\textbf{Infocom}} trace~\cite{Infocom-trace}: Bluetooth sightings of $98$ mobile and static nodes (iMotes) collected in an experiment during Infocom 2006.

\subsubsection{Case 1: no $QoS$ constraints}
In each scenario, we compare different allocation functions $\rho(n) = c_{k}\cdot n^{k}$, where $c_{k} = \frac{c_{\mathcal{M}}}{E_{p}[n^{k}]}$ is a normalization factor such that the constraint $E_{p}[\rho(n)]=c_{\mathcal{M}}$ is satisfied.

In Fig.~\ref{fig:optimize-ETm} we present simulation results in scenarios for the \textit{TVCM} (Fig.~\ref{fig:optimize-ETm-TVCM}) and Cabspotting (Fig.~\ref{fig:optimize-ETm-Cabspotting}) traces. Content popularity ($P_{p}(n)$) follows a \textit{Zipf} distribution with $n\leq 30$ and exponent $\alpha=\{1,2,3\}$. The availability constraint is set to $c_{\mathcal{M}}=10$. It can be seen that the optimal delay $E[T_{\mathcal{M}}]$ is achieved for $k = 0.5$, as Result~\ref{result:optimal-delay} predicts.
\begin{figure}
\centering
\subfigure[TVCM]{\includegraphics[width=0.49\linewidth]{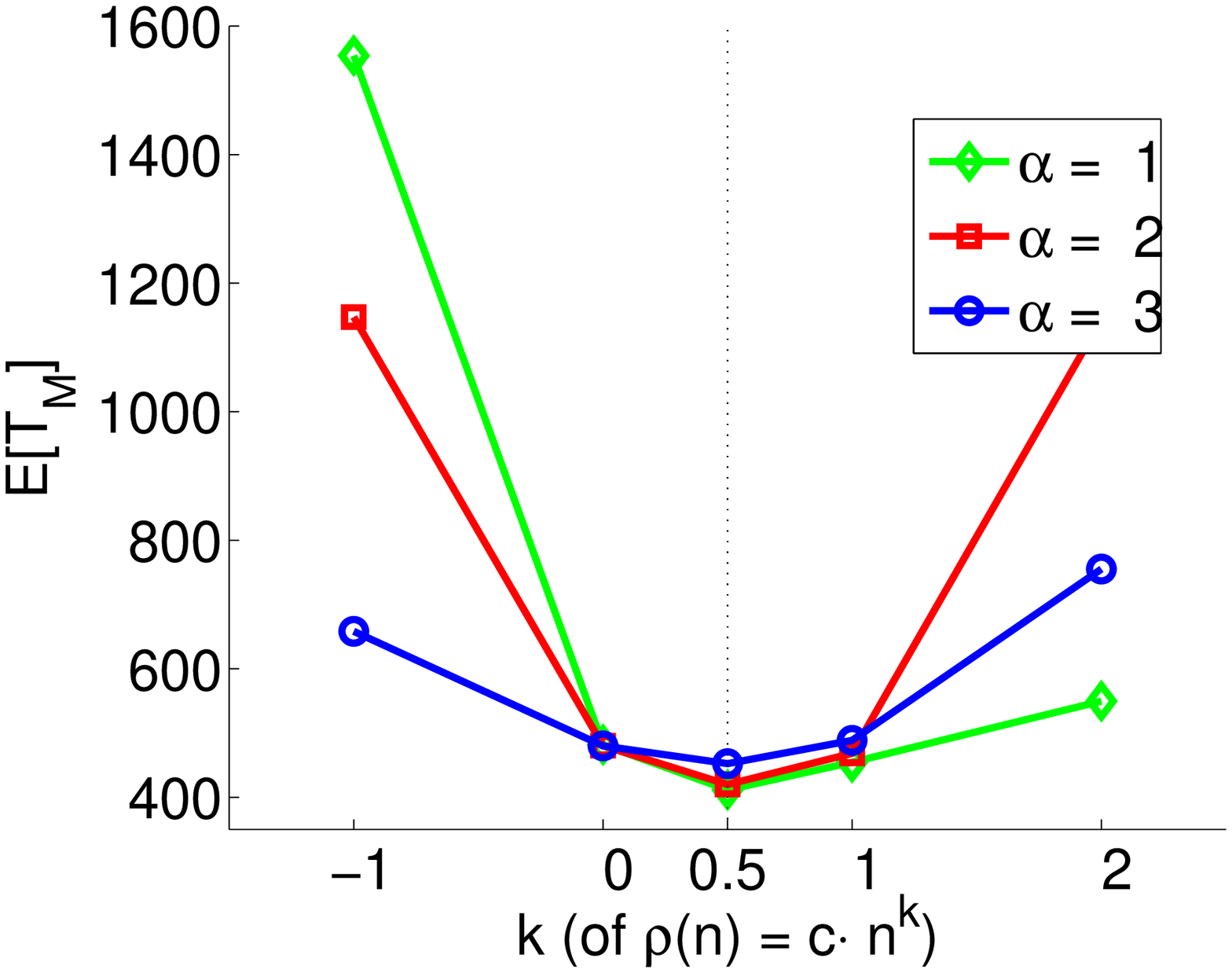}\label{fig:optimize-ETm-TVCM}}
\subfigure[Cabspotting]{\includegraphics[width=0.49\linewidth]{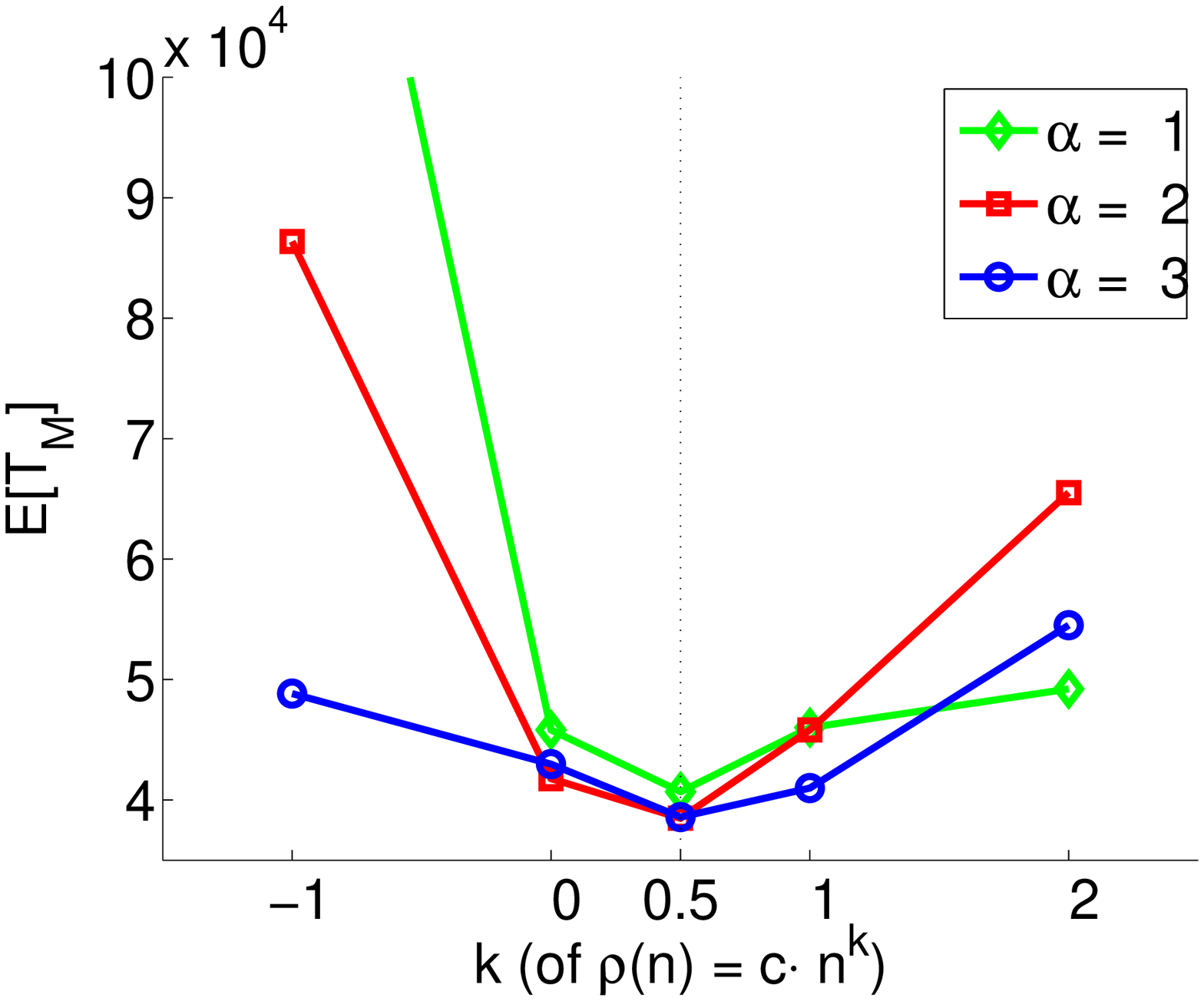}\label{fig:optimize-ETm-Cabspotting}}
\caption{Content access delay $E[T_{\mathcal{M}}]$ of different allocation policies $\rho(n) = c_{k}\cdot n^{k}$, where $c_{k} = \frac{c_{\mathcal{M}}}{E_{p}[n^{k}]}$.}
\label{fig:optimize-ETm}
\end{figure}

\subsubsection{Case 2: $QoS$ constraints}
To evaluate the performance of the allocation function $\rho(n)$ that follows after solving \eq{eq:optimization-problem-Roff-final-expression} (i.e. \textit{optimal} allocation), we compare the \textit{offloading ratio} $R_{off}$ it achieves with the offloading ratios of the following policies:\\
\textit{\textbf{Random}}: We randomly select a content and give a copy of it to a node. We repeat $M\cdot c_{\mathcal{M}}$ times.\\
\textit{\textbf{Square Root}}: We select $\rho(n) \propto \sqrt{n}$ (i.e. the allocation that achieves the minimum expected delay $E[T_{M}]$).\\
\textit{\textbf{Log}}: We select $\rho(n) \propto \log{n}$.

\textit{Random} policy has been used in related work as a baseline~\cite{multiple-offloading} and \textit{square root} policy is the optimal policy when the metric of interest is the content access delay (Section~\ref{sec:case-study-ETm}). Finally, we observed that the \textit{optimal} policy (\eq{eq:optimization-problem-Roff-final-expression}), in the scenarios considered, allocated copies only to the $10\%-20\%$ highest popularity contents. The \textit{log} policy allocates in a similar manner the copies (e.g. no copies to contents with low popularity).

Simulation results on the \textit{SLAW} and \textit{Infocom} scenarios are presented in Fig.~\ref{fig:optimize-Roff-SLAW-TTL0_2} and~\ref{fig:optimize-Roff-Infocom-TTL0_5}, respectively. The parameters in these scenarios are: $M=50$ messages, $P_{p}\sim Zipf$ with $n\in[1,30]$ and $\alpha=1$, total copies $M\cdot c_{\mathcal{M}}=\{50 ,100\}$. As it can be seen our \textit{optimal} policy (leftmost bar) achieves the highest offloading ratio $R_{off.}$. The random policy is clearly inferior than the others. Between \textit{square root} and \textit{log} policies, it is the latter that achieves better performance. These results indicate that, to maximize $R_{off.}$, it is better to allocate the available resources only for popular contents, and serve the non-popular exclusively through the infrastructure.

\begin{figure}
\centering
\subfigure[SLAW, ~~$TTL=530$]{\includegraphics[width=0.49\linewidth]{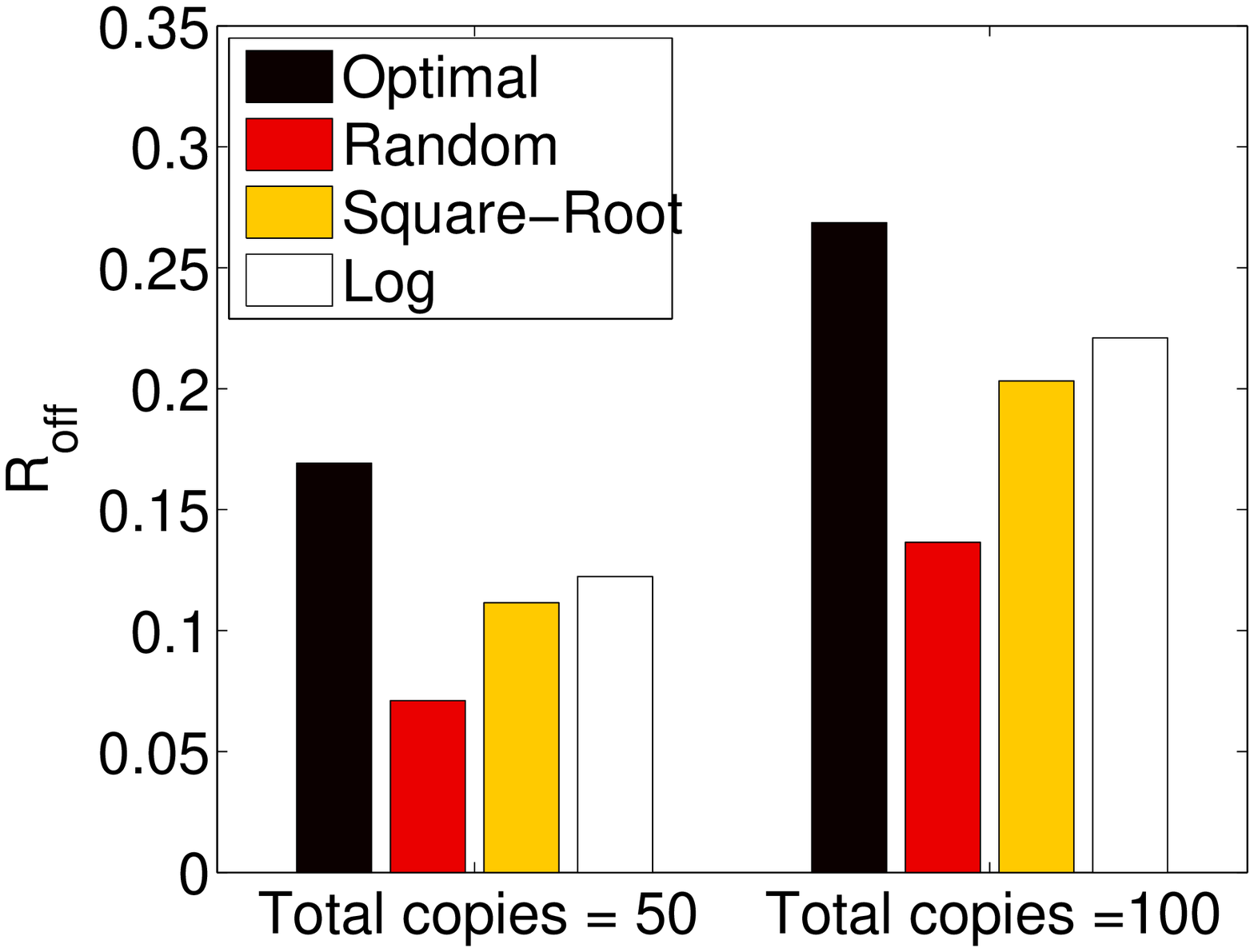}\label{fig:optimize-Roff-SLAW-TTL0_2}}
\subfigure[Infocom, ~~$TTL=10000$]{\includegraphics[width=0.49\linewidth]{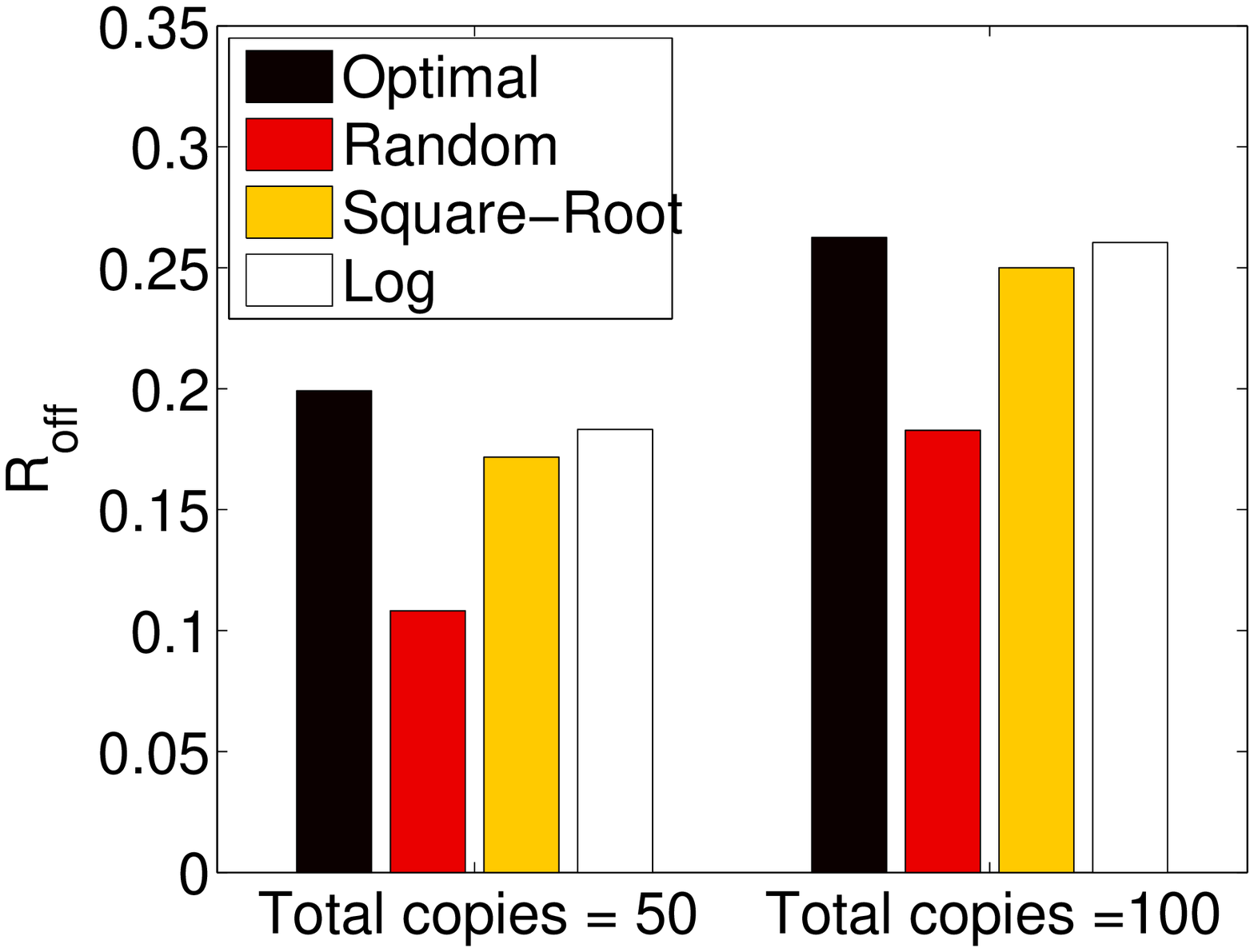}\label{fig:optimize-Roff-Infocom-TTL0_5}}
\caption{Offloading Ratio $R_{off.}$ of different allocation policies $\rho(n)$.}
\label{fig:optimize-Roff}
\end{figure}

\subsection{\revisionRed{Extensions and Discussion}}

\revisionRed{
As a performance evaluation extension, we discuss here some implementation issues for challenging mobile data offloading scenarios, where the knowledge of content popularity and node mobility is limited. We investigate two practical system designs and their performance, and how our theory can be applied in these -much different- scenarios. 
}

\revisionRed{
We believe this section is an initial step towards extending our base framework for more generic settings, and provides further insights for a system implementation.
}

\revisionRed{
\textbf{Popularity-blind system.} We first consider the scenario where the cellular network is not aware of the popularity of the contents. In this case, the options of a system are either to (a) treat every content as equal, following a \textit{uniform} (or, equivalently, a \textit{random}) policy and assigning equal number of holders for each content, or (b) try to estimate online the popularity in order to make a more careful holder assignment.
}

\revisionRed{
To this end, we propose a simple holder assignment algorithm for popularity-blind systems (\textit{no QoS} case), and compare it against the \textit{uniform} policy. Our solution, Algorithm~\ref{alg:heuristic}, combines Result~\ref{result:optimal-delay} (i.e., the optimal holder assignment when content popularity is known) and a simple online popularity estimation heuristic.
}

\revisionRed{
\begin{algorithm}[h]
\caption{Popularity-blind Mobile Data Offloading}
\begin{algorithmic}[1]
\State $\widehat{\Np}=1, \forall\mathcal{M}\in\textbf{M}$ \Comment Set all popularities equal to $1$.
\State $\mathcal{H}\leftarrow $ select\_holders $\left(Result~\ref{result:optimal-delay}~,~c_{\mathcal{M}}~,~\widehat{\Np}\right)$
\For {each content delivery of $\mathcal{M}$}
	\State $\widehat{\Np} = \widehat{\Np} +1$
	\State $\mathcal{H}\leftarrow $ update\_holders $\left(Result~\ref{result:optimal-delay}~,~c_{\mathcal{M}}~,~\widehat{\Np}\right)$
\EndFor
\end{algorithmic}
\label{alg:heuristic}
\end{algorithm}
}

\revisionRed{
Specifically, in Algorithm~\ref{alg:heuristic}, we initially set (line~1) the popularity of all contents equal to $1$, i.e. $\widehat{\Np}=1,~\forall\mathcal{M}\in\textbf{M}$. Then, an equal number of holders is assigned per content (line~2); Result~\ref{result:optimal-delay} gives $c_{\mathcal{M}}$ holders per content when all contents are equally popular. Every time a requester meets a holder and gets the content, the system is informed by the holder or the requester (e.g., with an ACK message) and the estimated popularity of the content is incremented by $1$ (line~4). Finally, the cellular provider updates regularly (once per a time-window, a certain number of content deliveries, etc.) the holder assignment (i.e., assigning/releasing holders) based on Result~\ref{result:optimal-delay} and using the latest estimated popularity values (line~5).
}

\revisionRed{
In Fig.~\ref{fig:unknown-popularity} we compare our \textit{heuristic} approach (Algorithm~\ref{alg:heuristic}) with the \textit{uniform} holder assignment policy, in synthetic mobility scenarios $f_{\lambda}\sim Gamma (\mu_{\lambda}=1, CV_{\lambda}=1)$ with content popularity $P_{p}(n) = Zipf(n\in[1,30]~,~\alpha = 1)$. As shown, the proposed algorithm leads to lower delivery delays than the \textit{uniform} policy. Moreover, it can be seen that, even without having any knowledge in advance about popularity patterns (worst-case scenario) and using a simple mechanism, we can achieve a performance close to the optimal (of a popularity-aware mechanism, i.e., Result~\ref{result:optimal-delay}). 
}

\revisionRed{
We observed similar behavior in a number of different simulation scenarios. The performance of Algorithm~\ref{alg:heuristic} is always better than the \textit{uniform} policy; the distance from the optimal case depends on the scenario, but is consistently close to it.
}

\revisionRed{
\textbf{Temporal mobility patterns.} As discussed earlier, considering only some average mobility characteristics (Section~\ref{sec:mobility-model}) not only facilitates analysis, but also, the implementation of real systems. In some scenarios though, a more detailed approach might be necessary. As an example, we consider here cases where a content distribution experiences long delays, so that temporal mobility characteristics come into play, i.e. the pairwise contact rates $\lambda_{ij}$ might change before delivery is completed. 
}

\revisionRed{
In particular, we assume a scenario composed of two alternating time windows of constant contact rates: in each time window $tw_{1}$ and $tw_{2}$ the contact rate of a node pair $\{i,j\}$ is $\lambda_{ij}^{(1)}$ and $\lambda_{ij}^{(2)}$, respectively. This could be the case, e.g., of different day/night node mobility patterns. We draw the corresponding pairwise contact rates from two \textit{Gamma} distributions $f_{\lambda}^{(1)}(\lambda)$ and $f_{\lambda}^{(2)}(\lambda)$, with $\mu_{\lambda}^{(1)}=1$ and $\mu_{\lambda}^{(2)}=5$. Content popularity patterns are $P_{p}(n) = Zipf(n\in[1,100]~,~\alpha = 2)$.
}

\revisionRed{
To investigate the effects of these temporal characteristics, we compare three mobile data offloading (with \textit{QoS}, $TTL=0.17$) mechanisms:\\
{\textit{Optimal (average)}}: The system is aware of the mobility patterns in \textit{both} time windows. The holder assignment is done based on the solution of \eq{eq:optimization-problem-Roff-final-expression}, with $\mu_{\lambda} = \frac{\mu_{\lambda}^{(1)}+\mu_{\lambda}^{(2)}}{2}$.\\
{\textit{Optimal (window-based)}}: The system is aware of the mobility patterns \textit{only} of the window in which the content distribution begins. The holder assignment is done based on the solution of \eq{eq:optimization-problem-Roff-final-expression}, with $\mu_{\lambda} =\mu_{\lambda}^{(1)}$ or $\mu_{\lambda} =\mu_{\lambda}^{(2)}$.\\
{\textit{Log}}: This mechanism is presented in Section~\ref{sec:offloading-evaluation}.
}

\revisionRed{
We present the simulation results in Fig.~\ref{fig:varying-lambda}. We can see that the \textit{Optimal (window-based)} mechanism, where the knowledge of mobility patterns is limited to only one time-window, does not achieve an $R_{off.}$ as high as in the case of the \textit{Optimal (average)} mechanism that has a complete view of mobility. Nevertheless, having even a limited information about the mobility patterns, is beneficial: as shown in Fig.~\ref{fig:varying-lambda}, \textit{Optimal (window-based)} performs always better than the \textit{Log} policy, which has been shown to achieve the best performance among the mobility oblivious policies (see Fig.~\ref{fig:optimize-Roff}).
}

\revisionRed{
Finally, as the window size increases (from the left to the right set of bars), the difference in the performance between the two \textit{Optimal} approaches diminishes. The reason is that a larger part of the content distribution process takes place within a single time window, and thus the extra knowledge of the \textit{Optimal (average)} mechanism adds less value to the prediction accuracy. This observation further supports our argument (see Section~\ref{sec:mobility-model}) that considering only a few average statistics is a good choice when there is a time-scale separation between content delivery and temporal mobility characteristics.
}

\begin{figure}
\subfigure[Popularity-blind offloading]{\includegraphics[width=0.49\linewidth]{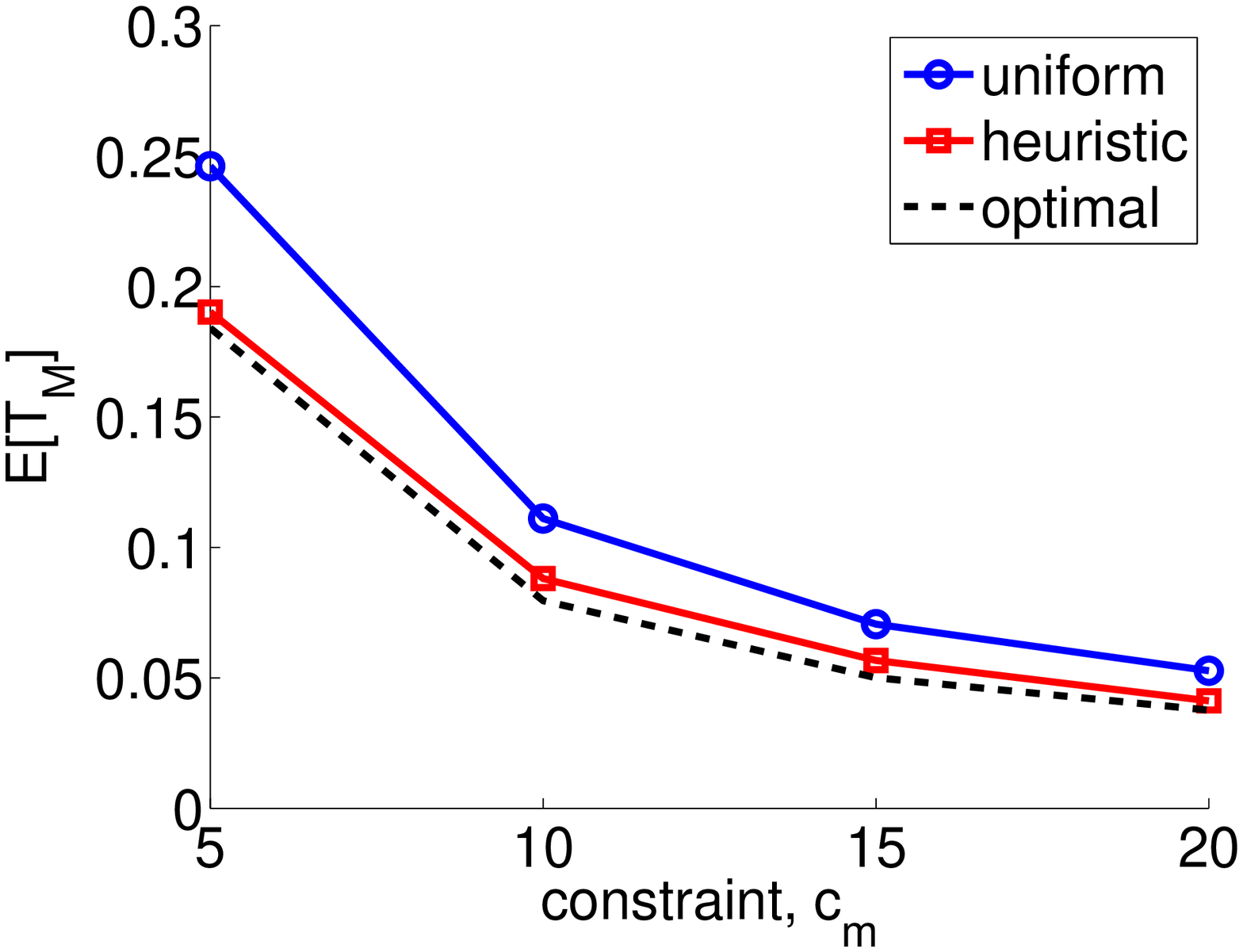}\label{fig:unknown-popularity}}
\subfigure[Temporal mobility patterns]{\includegraphics[width=0.49\linewidth]{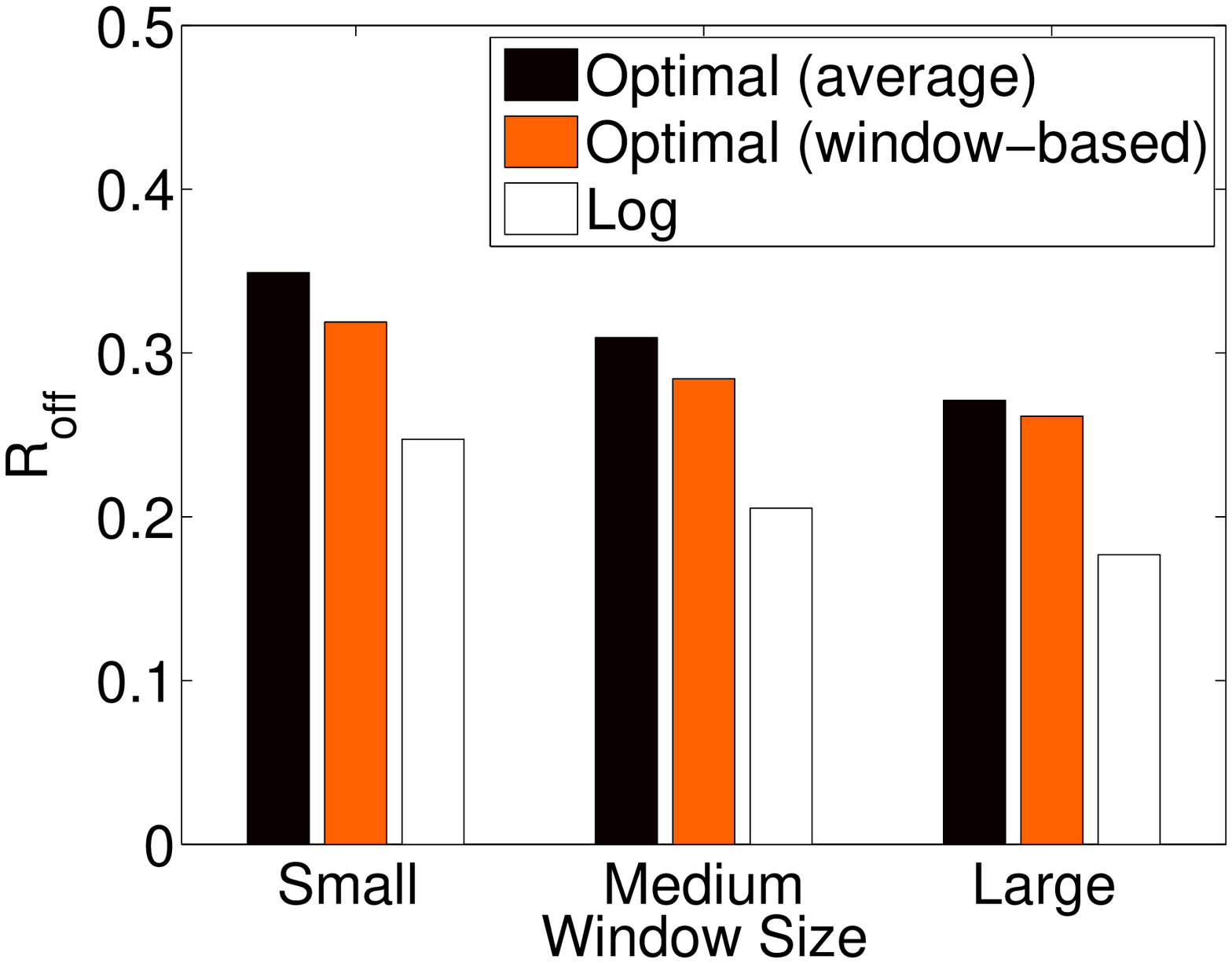}\label{fig:varying-lambda}}
\caption{\revisionRed{Simulation results for scenarios where (a) content popularity is not known in advance, and (b) mobility patterns drastically change within the content delivery time-window. }}
\label{fig:offloading-extensions}
\end{figure}

\section{Related Work}\label{sec:related-work}
Content-centric applications were introduced in opportunistic networking under the \textit{publish - subscribe} paradigm~\cite{Yoneki-publish-subscribe-dtn,podcasting,Costa-publish-subscribe-dtn,contentplace}, for which several data dissemination techniques have been proposed. In~\cite{Yoneki-publish-subscribe-dtn}, authors propose a mechanism that identifies social communities and the nodes-``hubs'', and builds an overlay network between them in order to efficiently disseminate data. SocialCast~\cite{Costa-publish-subscribe-dtn} based on information about nodes interests, social relationships and movement predictions, selects the set of holders. Similarly to the above approaches, ContentPlace~\cite{contentplace} uses both community detection and nodes social relationships information, to improve the performance of the content distribution. 

Under a different setting,~\cite{Gao-user-centric-DTN,CEDO} study content sharing mechanisms with limited resources (e.g. buffer sizes, number of holders). In~\cite{Gao-user-centric-DTN}, authors analytically investigate the data dissemination cost-effectiveness tradeoffs, and propose techniques based on contact patterns (i.e. $\lambda_{ij}$) and nodes interests. Similarly, CEDO~\cite{CEDO} aims at maximizing the total content delivery rate: by maintaining a utility per content, nodes make appropriate drop and scheduling decisions. 

\red{Some further modeling and analytic techniques for content-centric opportunistic networking include~\cite{gossip-age,float-content-infocom11}. In~\cite{gossip-age}, authors use a community mobility model and an analysis based on mean-field techniques to study an application of content updates, and derive results for the distribution of content age under different settings. \cite{float-content-infocom11} considers an application for local dissemination of contents and derives criticality conditions under which the content distribution (floating) is viable.}

Recently, further novel content-centric application have been proposed, like location-based applications~\cite{MobComp-next-decade,Ott-oppnet-applications} and mobile data offloading~\cite{offloading-wowmom11,Hui-Offloading, multiple-offloading}. The latter category, due to the rapid increase of mobile data demand, has attracted a lot of attention. In the setting of~\cite{offloading-wowmom11}, content copies are initially distributed (through the infrastructure) to a subset of mobile nodes, which then start propagating the contents epidemically. Differently, in~\cite{Hui-Offloading} the authors consider a limited number of holders, and study how to select the best holders-target-set for each message. In~\cite{multiple-offloading}, the same problem is considered, and (centralized) optimization algorithms are proposed that take into account more information about the network: namely, size and lifetimes of different contents, and interests, privacy policies and buffer sizes of each node.

In the majority of previous studies, although node interests and content popularity are taken into account, the focus has been on the algorithms and the applications themselves. We believe that our study complements existing work, by providing a common analytical framework for a number of these approaches that can be used both for predicting the performance of proposed schemes, as well as proposing improved ones.

\section{Conclusion}\label{sec:conclusion}
The increasing number of mobile devices and traffic demand, renders content-centric applications through opportunistic communication very promising. Hence, motivated by the lack of a common analytical framework, we modeled and analyzed the effects of content popularity / availability patterns in the performance of content-centric mechanisms.

As a part of future work we intend to study, in more detail, extensions of our model and to investigate further characteristics of content traffic patterns, like \textit{traffic locality} in location based social networks, and their performance effects.

\ifCLASSOPTIONcaptionsoff
  \newpage
\fi


\begin{thebibliography}{10}
\providecommand{\url}[1]{#1}
\csname url@samestyle\endcsname
\providecommand{\newblock}{\relax}
\providecommand{\bibinfo}[2]{#2}
\providecommand{\BIBentrySTDinterwordspacing}{\spaceskip=0pt\relax}
\providecommand{\BIBentryALTinterwordstretchfactor}{4}
\providecommand{\BIBentryALTinterwordspacing}{\spaceskip=\fontdimen2\font plus
\BIBentryALTinterwordstretchfactor\fontdimen3\font minus
  \fontdimen4\font\relax}
\providecommand{\BIBforeignlanguage}[2]{{%
\expandafter\ifx\csname l@#1\endcsname\relax
\typeout{** WARNING: IEEEtran.bst: No hyphenation pattern has been}%
\typeout{** loaded for the language `#1'. Using the pattern for}%
\typeout{** the default language instead.}%
\else
\language=\csname l@#1\endcsname
\fi
#2}}
\providecommand{\BIBdecl}{\relax}
\BIBdecl

\bibitem{pavlos-not-all-content}
P.~Sermpezis and T.~Spyropoulos, ``{N}ot all content is created equal: {E}ffect
  of popularity and availability for content-centric opportunistic
  networking,'' in \emph{Proc. ACM MOBIHOC}, 2014.

\bibitem{Gao-user-centric-DTN}
W.~Gao and G.~Cao, ``User-centric data dissemination in disruption tolerant
  networks,'' in \emph{Proc. IEEE INFOCOM}, 2011.

\bibitem{Yoneki-publish-subscribe-dtn}
E.~Yoneki, P.~Hui, S.~Chan, and J.~Crowcroft, ``A socio-aware overlay for
  publish/subscribe communication in delay tolerant networks,'' in \emph{Proc.
  ACM MSWiM}, 2007.

\bibitem{MobComp-next-decade}
\BIBentryALTinterwordspacing
M.~Satyanarayanan, ``Mobile computing: The next decade,'' \emph{SIGMOBILE Mob.
  Comput. Commun. Rev.}, vol.~15, no.~2, pp. 2--10, 2011. [Online]. Available:
  \url{http://doi.acm.org/10.1145/2016598.2016600}
\BIBentrySTDinterwordspacing

\bibitem{Ott-oppnet-applications}
J.~Ott and J.~Kangasharju, ``Opportunistic content sharing applications,'' in
  \emph{Proc. ACM NoM Workshop}, 2012.

\bibitem{opp-computing}
M.~Conti, S.~Giordano, M.~May, and A.~Passarella, ``From opportunistic networks
  to opportunistic computing,'' \emph{Communications Magazine, IEEE}, vol.~48,
  no.~9, pp. 126--139, sept. 2010.

\bibitem{Scampi-OppComp}
M.~Pitk{\"a}nen, T.~K{\"a}rkk{\"a}inen, and {et al.}, ``{S}{C}{A}{M}{P}{I}:
  service platform for social aware mobile and pervasive computing,'' \emph{ACM
  Comput. Commun. Rev.}, vol.~42, no.~4, pp. 503--508, Sep. 2012.

\bibitem{offloading-wowmom11}
J.~Whitbeck, M.~Amorim, Y.~Lopez, J.~Leguay, and V.~Conan, ``Relieving the
  wireless infrastructure: When opportunistic networks meet guaranteed
  delays,'' in \emph{Proc. IEEE WoWMoM}, 2011.

\bibitem{Hui-Offloading}
B.~Han, P.~Hui, V.~Kumar, M.~Marathe, J.~Shao, and A.~Srinivasan, ``Mobile data
  offloading through opportunistic communications and social participation,''
  \emph{IEEE Trans. on Mob. Comp.,}, vol.~11, no.~5, pp. 821--834, 2012.

\bibitem{multiple-offloading}
Y.~Li, M.~Qian, D.~Jin, P.~Hui, Z.~Wang, and S.~Chen, ``Multiple mobile data
  offloading through disruption tolerant networks,'' \emph{IEEE Transactions on
  Mobile Computing}, vol. PrePrints, 2013.

\bibitem{contentplace}
C.~Boldrini, M.~Conti, and A.~Passarella, ``Design and performance evaluation
  of contentplace, a social-aware data dissemination system for opportunistic
  networks,'' \emph{Computer Networks}, vol.~54, no.~4, pp. 589--604, 2010.

\bibitem{CEDO}
F.~{Neves dos Santos}, B.~Ertl, C.~Barakat, T.~Spyropoulos, and T.~Turletti,
  ``Cedo: Content-centric dissemination algorithm for delay-tolerant
  networks,'' in \emph{Proc. ACM MSWiM}, 2013.

\bibitem{Picu:wowmom2012}
A.~Picu, T.~Spyropoulos, and T.~Hossmann, ``An analysis of the information
  spreading delay in heterogeneous mobility dtns,'' in \emph{Proc. IEEE
  WoWMoM}, 2012.

\bibitem{Gao2009}
W.~Gao, Q.~Li, B.~Zhao, and G.~Cao, ``Multicasting in delay tolerant networks:
  a social network perspective,'' in \emph{Proc. ACM MobiHoc}, 2009.

\bibitem{Conan2007}
V.~Conan, J.~Leguay, and T.~Friedman, ``Characterizing pairwise inter-contact
  patterns in delay tolerant networks,'' in \emph{Proc. ACM Autonomics}, 2007.

\bibitem{chaintreau-tmc}
A.~Chaintreau, P.~Hui, J.~Crowcroft, C.~Diot, R.~Gass, and J.~Scott, ``Impact
  of human mobility on opportunistic forwarding algorithms,'' \emph{IEEE Trans.
  on Mobile Computing}, vol.~6, no.~6, pp. 606--620, 2007.

\bibitem{transient-contact-patterns}
W.~Gao, G.~Cao, T.~{La Porta}, and J.~Han, ``On exploiting transient social
  contact patterns for data forwarding in delay-tolerant networks,'' \emph{IEEE
  Trans. on Mob. Computing}, vol.~12, no.~1, pp. 151--165, jan. 2013.

\bibitem{Costa-publish-subscribe-dtn}
P.~Costa, C.~Mascolo, M.~Musolesi, and G.~Picco, ``Socially-aware routing for
  publish-subscribe in delay-tolerant mobile ad hoc networks,'' \emph{IEEE
  JSAC}, vol.~26, no.~5, pp. 748--760, 2008.

\bibitem{podcasting}
V.~Lenders, G.~Karlsson, and M.~May, ``Wireless ad hoc podcasting,'' in
  \emph{Proc. IEEE SECON}, 2007.

\bibitem{thesis-popularity-prediction}
A.-F. Tatar, ``Predicting user-centric behavior: Content popularity and
  mobility,'' \emph{Doctoral Thesis}, 2014.

\bibitem{spray-and-wait}
T.~Spyropoulos, K.~Psounis, and C.~S. Raghavendra, ``Efficient routing in
  intermittently connected mobile networks: the multiple-copy case,''
  \emph{IEEE/ACM Trans. Netw.}, vol.~16, no.~1, 2008.

\bibitem{pavlos-TMC-traffic}
P.~Sermpezis and T.~Spyropoulos, ``Modelling and analysis of communication
  traffic heterogeneity in opportunistic networks,'' \emph{IEEE Transactions on
  Mobile Computing}, no.~99, 2015.

\bibitem{youtube-traffic-from-edge}
P.~Gill, M.~Arlitt, Z.~Li, and A.~Mahanti, ``Youtube traffic characterization:
  A view from the edge,'' in \emph{Proc. ACM IMC}, 2007.

\bibitem{RSS-traffic-characteristics}
H.~Liu, V.~Ramasubramanian, and E.~G. Sirer, ``Client behavior and feed
  characteristics of rss, a publish-subscribe system for web micronews,'' in
  \emph{Proc. ACM IMC}, 2005.

\bibitem{pavlos-dataset-AOC}
P.~Sermpezis and T.~Spyropoulos, ``Inferring content-centric traffic for
  opportunistic networking from geo-location social networks,'' in \emph{Proc.
  IEEE WoWMoM (AOC workshop)}, 2015.

\bibitem{RossProbModels}
S.~M. Ross, \emph{Introduction to Probability Models}, 9th~ed.\hskip 1em plus
  0.5em minus 0.4em\relax Academic Press, Elsevier, 2007.

\newpage

\bibitem{Passarella-aggregateIT}
A.~Passarella and M.~Conti, ``Analysis of individual pair and aggregate
  intercontact times in heterogeneous opportunistic networks,'' \emph{IEEE
  Trans. on Mobile Computing,}, vol.~12, no.~12, pp. 2483--2495, 2013.

\bibitem{Oehlert1992}
G.~W. Oehlert, ``A note on the delta method,'' \emph{The American
  Statistician}, vol.~46, no.~1, pp. 27--29, 1992.

\bibitem{Boldrini-Residuals}
C.~Boldrini, M.~Conti, and A.~Passarella, ``From pareto inter-contact times to
  residuals,'' \emph{IEEE Communications Letters}, vol.~15, no.~11, pp.
  1256--1258, November 2011.

\bibitem{replication-p2p}
E.~Cohen and S.~Shenker, ``Replication strategies in unstructured peer-to-peer
  networks,'' in \emph{Proc. ACM SIGCOMM}, 2002.

\bibitem{tvcm}
W.-J. Hsu, T.~Spyropoulos, K.~Psounis, and A.~Helmy, ``Modeling spatial and
  temporal dependencies of user mobility in wireless mobile networks,''
  \emph{IEEE/ACM Trans. on Networking}, vol.~17, no.~5, pp. 1564--1577, 2009.

\bibitem{slaw}
K.~Lee, S.~Hong, S.~J. Kim, I.~Rhee, and S.~Chong, ``Slaw: A new mobility model
  for human walks,'' in \emph{Proc. IEEE INFOCOM}, 2009.

\bibitem{cabspotting-trace}
M.~Piorkowski, N.~Sarafijanovic-Djukic, and M.~Grossglauser, ``{CRAWDAD} data
  set epfl/mobility (v. 2009-02-24),'' Downloaded from
  http://crawdad.cs.dartmouth.edu/epfl/mobility.

\bibitem{Infocom-trace}
P.~Hui, A.~Chaintreau, J.~Scott, R.~Gass, J.~Crowcroft, and C.~Diot, ``Pocket
  switched networks and human mobility in conference environments,'' in
  \emph{Proc. ACM WDTN 2005}.

\bibitem{gossip-age}
A.~Chaintreau, J.-Y. {Le Boudec}, and N.~Ristanovic, ``The age of gossip:
  spatial mean field regime,'' in \emph{SIGMETRICS 2009}, pp. 109--120.

\bibitem{float-content-infocom11}
E.~Hyytia, J.~Virtamo, P.~Lassila, J.~Kangasharju, and J.~Ott, ``When does
  content float? characterizing availability of anchored information in
  opportunistic content sharing,'' in \emph{Proc. IEEE INFOCOM}, 2011.

\bibitem{practical-optimization-book}
A.~Antoniou and W.-S. Lu, \emph{{Practical Optimization: Algorithms and
  Engineering Applications}}.\hskip 1em plus 0.5em minus 0.4em\relax Springer,
  2007.

\end{thebibliography}


\begin{IEEEbiography}[{\includegraphics[width=1in,height=1.25in,clip,keepaspectratio]{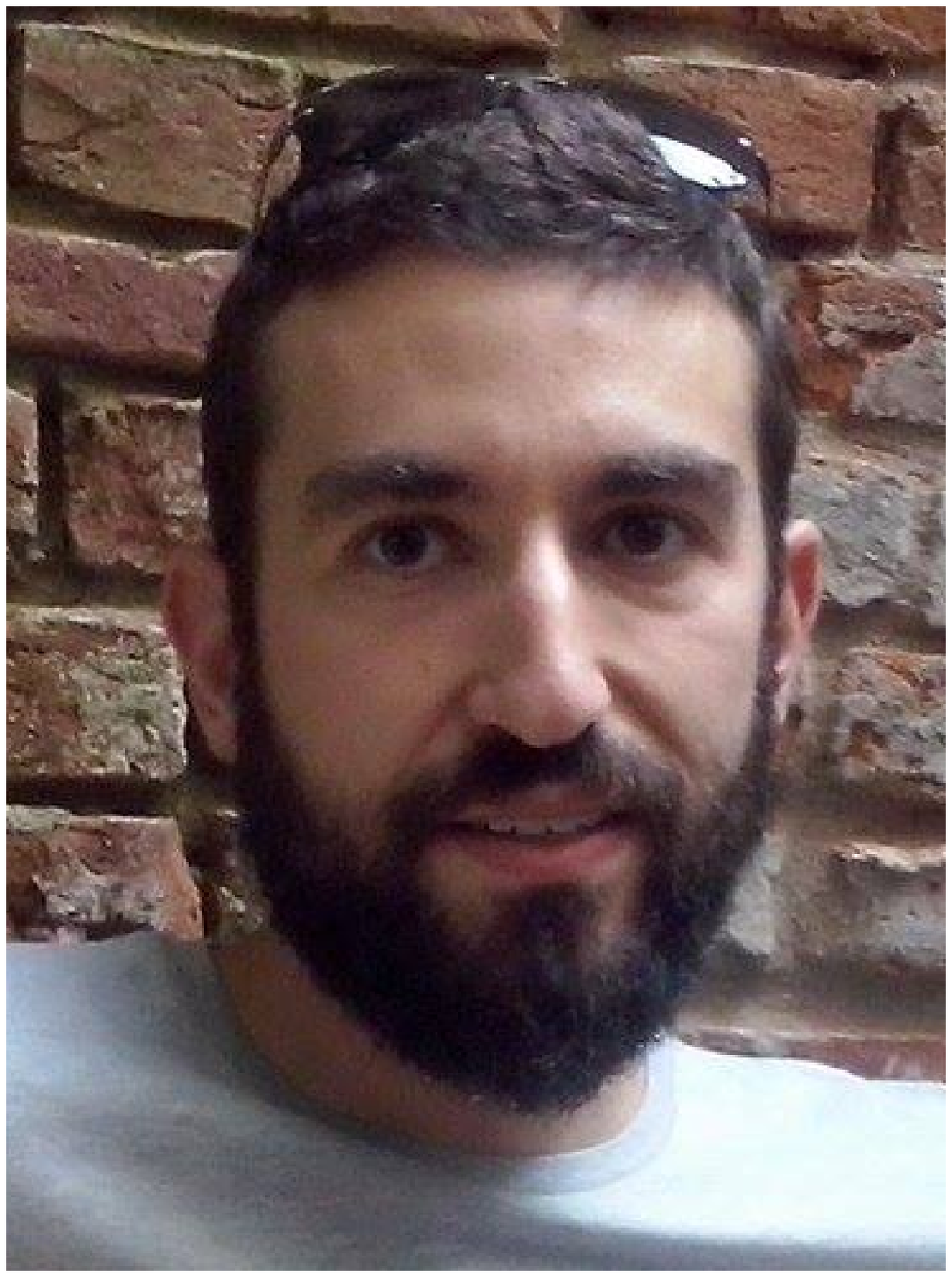}}]{Pavlos Sermpezis}
received the Diploma in Electrical and Computer Engineering from the Aristotle University of Thessaloniki, Greece, and a PhD in Computer Science and Networks from EURECOM, Sophia Antipolis, France. He is currently a post-doctoral researcher at FORTH, Greece. His main research interests are in modeling and performance analysis for mobile-to-mobile communications, and interdomain routing for the Internet.
\end{IEEEbiography}
\begin{IEEEbiography}[{\includegraphics[width=1in,height=1.25in,clip]{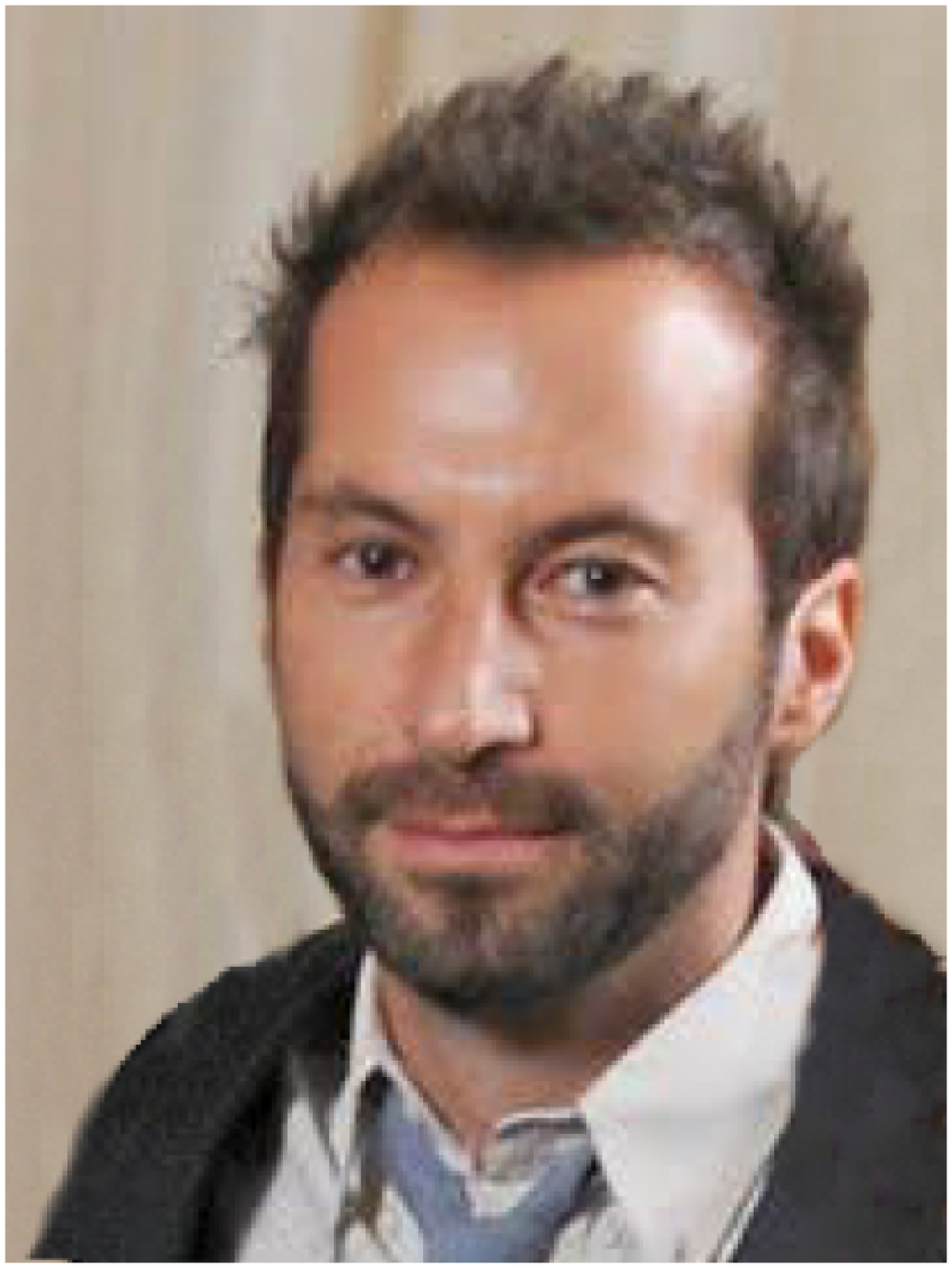}}]{Thrasyvoulos Spyropoulos}
received the Diploma in Electrical and Computer Engineering from the National Technical University of Athens, Greece, and a Ph.D degree in Electrical Engineering from the University of Southern California. He was a post-doctoral researcher at INRIA and then, a senior researcher with the Swiss Federal Institute of Technology (ETH) Zurich. He is currently an Assistant Professor at EURECOM, Sophia-Antipolis. He is the recipient of the best paper award in IEEE SECON 2008, and IEEE WoWMoM 2012.
\end{IEEEbiography} 
\vfill

\appendices
\section{Proof of Result~\ref{R:ATD}}\label{appendix:proof-R:ATD}

\begin{proof}
To calculate the average performance, we need to modify the previous analysis as following: Consider a content $\mathcal{M}$ of initial popularity $\Np(0)=n$ and availability $\Na(0)=m$, i.e. initially $n$ nodes are looking for the content and $m$ nodes hold the content. When the first requester access the content, the number of holders will increase to $m+1$ and the remaining requesters will be $n-1$. Building a Markov Chain as in Fig~\ref{fig:markov-chain}, where each state denotes the number of holders, it can be shown for the expected delay of moving from state $m+k$ to state $m+k+1$, $k\in[0,n-1]$, that it holds
\begin{equation}\label{eq:ET-step}
E[T_{k,k+1}] \approx \frac{1}{(m+k)\cdot (n-k)\cdot \mu_{\lambda}} 
\end{equation}
where $m+k$ are the nodes holding the content, $n-k$ the remaining requesters and $\mu_{\lambda}$ the mean contact rate.

From the above analysis, it follows straightforward that the expected time till the first requester to access the message is $E[T^{1}]=E[T_{0,1}]$ and till the $\ell^{th}$ requester to access it is 
\begin{equation}\label{eq:ET-ell}
E[T^{\ell}]=\sum_{k=0}^{\ell-1}E[T_{k,k+1}]
\end{equation}

Let us now define the sum of delays $E[T^{\ell}]$ (i.e. delivery delays for each requester) for a message $\mathcal{M}$ with initial availability $\Na(0)=m$ and initial popularity $\Np(0)=n$, as:
\begin{equation} 
S(T_{\mathcal{M}}|m,n)= \sum_{\ell=1}^{n} E[T^{\ell}|m,n]
\end{equation}
From \eq{eq:ET-step} and \eq{eq:ET-ell}, we can write for $S(T_{\mathcal{M}}|m,n)$:
\begin{align}
S(T_{\mathcal{M}}|m,n)
&\approx  \sum_{\ell=1}^{n}\sum_{k=0}^{\ell-1}\frac{1}{(m+k)\cdot (n-k)\cdot \mu_{\lambda}} \nonumber\\
& =\sum_{k=0}^{n-1}(n-k)\cdot \frac{1}{(m+k)\cdot (n-k)\cdot \mu_{\lambda}} \nonumber\\
& =\frac{1}{ \mu_{\lambda}}\cdot\sum_{k=0}^{n-1}\frac{1}{m+k} \nonumber\\
& =\frac{1}{ \mu_{\lambda}}\cdot\sum_{k=m}^{m+n-1}\frac{1}{k}
\end{align}
and using the approximation of the harmonic sum\footnote{$\sum_{k=1}^{N}\approx \ln(N)+\gamma +O\left(\frac{1}{N}\right)$, where $\gamma$ is the \textit{Euler-Mascheroni} constant.}, we get
\begin{align}\label{eq:Sum-Tm-m,n}
S(T_{\mathcal{M}}|m,n)\approx \frac{1}{ \mu_{\lambda}}\cdot \ln\left(1+\frac{n}{m-1}\right)\approx \frac{1}{ \mu_{\lambda}}\cdot \ln\left(1+\frac{n}{m}\right)
\end{align}

Averaging over all the content in the network, we can write for the expected content access delay:
\begin{align}
E[T_{\mathcal{M}}] = \frac{\sum_{\mathcal{M}}S(T_{\mathcal{M}})}{\sum_{\mathcal{M}}\Np}
\end{align}
or, since (i) (by definition) there are $M\cdot P_{p}(n)$ contents in the network, and (ii) we do not differentiate between contents with the same popularity/availability:
\begin{align}
E[T_{\mathcal{M}}] &= \frac{\sum_{n}S(T_{\mathcal{M}}|n)\cdot (M\cdot P_{p}(n))}{\sum_{\mathcal{M}}n\cdot(M\cdot P_{p}(n))}= \frac{\sum_{n}S(T_{\mathcal{M}}|n) \cdot P_{p}(n)}{E_{p}[n]}\nonumber\\
&= \frac{\sum_{n}S(T_{\mathcal{M}}|n,m)\cdot g(m|n) \cdot P_{p}(n)}{E_{p}[n]}\nonumber\\
&\approx \frac{\sum_{n}\frac{1}{ \mu_{\lambda}}\cdot \ln\left(1+\frac{n}{m}\right)\cdot g(m|n) \cdot P_{p}(n)}{E_{p}[n]}
\end{align}
where in the last line we substituted from \eq{eq:Sum-Tm-m,n}.

We can further use Jensen's inequality (since the function $h(x) = \ln\left(1+\frac{n}{x}\right)$ is convex) or the respective approximation, and finally write:
\begin{align}
E[T_{\mathcal{M}}] \approx \frac{1}{\mu_{\lambda}\cdot E_{p}[n]}\cdot E_{p}\left[ \ln\left(1+\frac{n}{\overline{g}(n)}\right)\right]
\end{align}
which proves the result.
%
\end{proof}

\section{Proof of Result~\ref{R:MDA} and Example}\label{appendix:R:MDA}

\begin{proof}
Def.~\ref{def: Mobility Dependent Allocation} says that \textit{who} holds a content and \textit{who} is interested in it is not independent of their mobility patterns. The contact rates between the requester of a content and the holders of it, are not distributed with the contact rates distribution $f_{\lambda}(\lambda)$, since the requesters-holders contact rates are mobility dependent. It can be shown that the requesters-holders contact rates are distributed as~\cite{pavlos-TMC-traffic}
\begin{equation}
 f_{\pi}(\lambda) = \frac{1}{E_{\lambda}[\pi(\lambda)]}\cdot \pi(\lambda)\cdot f_{\lambda}(\lambda)
\end{equation}
Hence, \eq{eq:definition-fml} and \eq{eq:mean-ml-x} need to be modified as:
\begin{equation}\label{eq:definition-fmpi}
\XM\sim f_{m\pi}(x) = \left(f_{\pi}\ast f_{\pi} \cdots \ast f_{\pi} \right)_{m}
\end{equation}
and
\begin{equation}\label{eq:mean-mpi-x}
E[\XM|\Na=m] = E_{m\pi}[x] = m\cdot \frac{E_{\lambda}[\lambda\cdot \pi(\lambda)]}{E_{\lambda}[\pi(\lambda)]}= m\cdot \mu_{\lambda}^{(\pi)}
\end{equation}
Then, it can be easily seen that following the same analysis, we get the same expressions as in Theorems~\ref{thm:lower-bound-ETm} and~\ref{thm:upper-bound-P-Tm-TTL} and Result~\ref{R:ATD} where, now, the mean contact rate $\mu_{\lambda}$ is replaced by the \textit{mean mobility dependent requesters-holders contact rate} $\mu_{\lambda}^{(\pi)}$.
\end{proof}

\paragraph*{Example Scenario} For each content $\mathcal{M}$, its holders are selected taking into account their contact rates with the requesters with the following mechanism: Each node $i$ candidate to be a holder is assigned a weight $w_{i} = \prod_{j\in\Cp}\lambda_{ij}$. Then, each of them is selected to be one of the $\Na$ holders with probability $p_{i} = \frac{w_{i}}{\sum_{i}w_{i}}$. Now, for the node pair $\{i,j\}$ ($i\in\Ca$, $j\in\Cp$) it holds that
\begin{equation}
 \pi_{ij} = \frac{w_{i}}{\sum_{i}w_{i}}= \frac{\prod_{k\in\Cp}\lambda_{ik}}{\sum_{i}\prod_{k\in\Cp}\lambda_{ik}}=\frac{\lambda_{ij}\cdot \prod_{k\in\Cp\setminus\{j\}}\lambda_{ik}}{\sum_{i}\prod_{k\in\Cp}\lambda_{ik}}
\end{equation}
for which, when the node popularity $\Np=|\Cp|$ is large enough, we can write
\begin{equation}
 \pi_{ij} \approx \frac{\lambda_{ij}\cdot c_{1}}{c_{2}}
\end{equation}
where $c_{1},c_{2}$ take approximately the same value $\forall i,j$, i.e. $\pi(\lambda) = c\cdot \lambda$, $c=\frac{c_{1}}{c_{2}}$. Substituting $\pi(\lambda)$ in Result~\ref{R:MDA}, gives
\begin{equation}
 \mu_{\lambda}^{(\pi)} = \frac{E_{\lambda}[\lambda\cdot \pi(\lambda)]}{E_{\lambda}[\pi(\lambda)]}=\frac{E_{\lambda}[\lambda^{2}]}{E_{\lambda}[\lambda]} = \mu_{\lambda}\cdot(1+CV_{\lambda}^{2})
\end{equation}

\section{Minimum of Pareto distributed random variables}\label{appendix:min-pareto}

\begin{proof}
For the random variable $T_{\mathcal{M}} = \min_{i\in\mathcal{C}_{a}}\lbrace T_{ij}^{(r)}\rbrace$, where each $T_{ij}^{(r)}$ is a random variable distributed with a Pareto distribution with scale parameter $t_{0}$ and shape parameter $\alpha_{ij}$, it holds that:
\begin{align}
P\lbrace T_{\mathcal{M}}>t\rbrace 	&= \prod_{i\in\mathcal{C}_{a}} P\lbrace T_{ij}^{(r)}>t\rbrace = \prod_{i\in\mathcal{C}_{a}} \left(\frac{t_{0}}{t}\right)^{\alpha_{ij}}\nonumber\\
									&= \left(\frac{t_{0}}{t}\right)^{\sum_{i\in\mathcal{C}_{a}} \alpha_{ij}}
\end{align}
which means that $T_{\mathcal{M}}$ follows a Pareto distribution with scale and shape parameters $t_{0}$ and $A_{\mathcal{M}}=\sum_{i\in\mathcal{C}_{a}} \alpha_{ij}$, respectively. 
\end{proof}

\section{Proofs for the performance metrics expressions of the Pareto case}\label{appendix:pareto-expressions-proofs}
\subsection{Content Access Delay}
\begin{proof}
The expectation of an (American) Pareto distributed random variable ($Pareto(\alpha, t_{0})$) is $\frac{t_{0}}{\alpha-1}$. Hence, in the derivation of \eq{eq:expected-Ta-generic}, one only needs to change the integral in the last equality as:
\begin{equation}\label{eq:expected-Ta-generic-pareto}
E[T_{\mathcal{M}}] = \sum_{m} \int \frac{t_{0}}{x-1}\cdot f_{m\alpha}(x)dx \cdot P_{a}^{req.}(m)
\end{equation}
Substituting $P_{a}^{req.}(m)$ from Lemma~\ref{thm:Pint-availability-generic} and proceeding as in the exponential case, we subsequently get:
\begin{align}
&E[T_{\mathcal{M}}] = \sum_{m} \int \frac{t_{0}}{x-1}\cdot f_{m\alpha}(x)dx \cdot \frac{E_{p}[n\cdot g(m|n)]}{E_{p}[n]}\nonumber\\
&=\frac{t_{0}}{E_{p}[n]}\cdot E_{p}\left[n\cdot \sum_{m} E_{m\alpha}\left[\frac{1}{x-1}\right]\cdot g(m|n)\right]
\end{align}
which is the exact expression for $E[T_{\mathcal{M}}]$ in Table~\ref{table:pareto-expressions}.

Applying \textit{Jensen's inequality} for the convex function $h(x)=\frac{1}{x-1}$, gives:
\begin{equation}
E_{m\alpha}\left[\frac{1}{x-1}\right]\geq \frac{1}{m\cdot\mu_{\alpha}-1}
\end{equation}
and, thus:
\begin{align}
E[T_{\mathcal{M}}] &\geq \frac{t_{0}}{E_{p}[n]}\cdot E_{p}\left[n\cdot \sum_{m} \frac{1}{m\cdot\mu_{\alpha}-1}\cdot g(m|n)\right]\nonumber\\
&=\frac{t_{0}}{E_{p}[n]}\cdot E_{p}\left[n\cdot E_{g}\left[\frac{1}{m\cdot\mu_{\alpha}-1}\right]\right]\nonumber\\
&\geq \frac{t_{0}}{E_{p}[n]}\cdot E_{p}\left[n\cdot \frac{1}{\overline{g}(n)\cdot\mu_{\alpha}-1}\right]
\end{align}
where for the last line we applied \textit{Jensen's inequality} for the expectation $E_{g}\left[\frac{1}{m\cdot\mu_{\alpha}-1}\right]$.
\end{proof}

\subsection{Content Access Probability}
\begin{proof}
In the Pareto case, the integral in \eq{eq:probability-Ta-generic} changes as: $\int \left(\frac{t_{0}}{t_{0}+TTL}\right)^{x}\cdot f_{m\alpha}(x)dx$, for $TTL\geq t_{0}$, because for a Pareto random variable $x\sim Pareto(\alpha, t_{0})$ it holds that $P\{x\leq TTL\}=1-\left(\frac{t_{0}}{t_{0}+TTL}\right)^{\alpha}$. Following the same methodology as before and observing that the function $h(x) = \left(\frac{t_{0}}{t_{0}+TTL}\right)^{\alpha}$ is convex, the expressions of Table~\ref{table:pareto-expressions} follow similarly.
\end{proof}

\section{Proof of Result~\ref{result:optimal-delay}}\label{appendix:proof-result:optimal-delay}
\begin{proof}
Using as an approximation for $E[T_{\mathcal{M}}]$ the expression of Theorem~\ref{thm:lower-bound-ETm}, we can write
\begin{equation}
 \textstyle E[T_{\mathcal{M}}]= \frac{1}{\mu_{\lambda}\cdot E_{p}[n]}\cdot E_{p}\left[\frac{n}{\overline{g}(n)}\right]\nonumber
\end{equation}
Jensen's inequality used in \eq{eq:jensen-g}, becomes equality when $g(m|n)$ is deterministic. This suggests that among all the functions $g(m|n)$ with the same average value $\overline{g}(n)$, the minimum delay can be achieved in the case: $\rho(n) = \overline{g}(n)$. Thus, the $E[T_{\mathcal{M}}]$ minimization problem becomes equivalent to 

\begin{footnotesize}
\begin{align}\label{eq:equivalent-min-ETm}
\min\{E_{p}\left[\frac{n}{\rho(n)}\right]\} = \sum_{n} \frac{n}{\rho(n)}\cdot P_{p}(n)=\sum_{n} \frac{n}{\rho_{n}}\cdot P_{p}(n)
\end{align}
\end{footnotesize}
where we expressed the expectation as a sum and denoted $\rho_{n} = \rho(n)$.

Moreover, we can express the content copies constraint as
\begin{equation}\label{eq:equivalent-constraint-ETm}
c_{\mathcal{M}} = \textstyle \frac{\sum_{\mathcal{M}} \Na}{M} = E_{p}[\rho(n)]=\sum_{n} \rho_{n}\cdot P_{p}(n)
\end{equation}
Using \eq{eq:equivalent-min-ETm} and \eq{eq:equivalent-constraint-ETm}, the optimization problem becomes
\begin{equation}\label{eq:delay-optimization-rho(n)}
 \min_{\overline{\rho}}\lbrace \sum_{n} \frac{n}{\rho_{n}}\cdot P_{p}(n)\rbrace~~~~s.t.~~~\sum_{n} \rho_{n}\cdot P_{p}(n)= c_{\mathcal{M}}
\end{equation}
where $\overline{\rho}$ denotes the vector with components $\rho_{n}$. 

The optimization problem of \eq{eq:delay-optimization-rho(n)} is \textit{convex} and, thus, it can be solved with the method of Lagrange multipliers~\cite{practical-optimization-book}. Hence, we need to find the values of $\overline{\rho}$ for which it holds that
\begin{equation}
 \nabla \left(\sum_{n} \frac{n}{\rho_{n}}\cdot P_{p}(n)\right) + \nabla \lambda_{0} \left(\sum_{n} \rho_{n}\cdot P_{p}(n)- c_{\mathcal{M}}\right)=0\nonumber
\end{equation}
where $\lambda_{0}$ is the langrangian multiplier. Here, the constraint $\rho_{n}\geq 0$ needs also to be taken into account. However, it is proved to be an inactive constraint (the solution satisfies it) and thus we omit it at this step for simplicity. Similarly, we assume a large enough network, i.e. always holds $\rho_{n}\leq N$.

The differentiation over $\rho_{n}$ gives
\begin{equation}\label{eq:Na-l0}
 \rho_{n} = \frac{1}{\sqrt{\lambda_{0}}}\cdot\sqrt{n}
\end{equation}
Substituting \eq{eq:Na-l0} in the constraint expression $\sum_{n} \rho_{n}\cdot P_{p}(n)= c_{\mathcal{M}}$ (\eq{eq:delay-optimization-rho(n)}), we can easily get
\begin{equation}\label{eq:sqrt-lambda0}
 \sqrt{\lambda_{0}} = \frac{\sum_{n} \sqrt{n}\cdot P_{p}(n)}{c_{\mathcal{M}}} = \frac{E_{p}[\sqrt{n}]}{c_{\mathcal{M}}}
\end{equation}
Then, substituting \eq{eq:sqrt-lambda0} in \eq{eq:Na-l0}, gives
\begin{equation}\label{eq:optimal-rho(n)}
\rho(n) = \rho_{n} =  \frac{c_{\mathcal{M}}}{E_{p}[\sqrt{n}]} \cdot\sqrt{n}
\end{equation}
Finally, the values of \eq{eq:optimal-rho(n)} satisfy the \textit{Karush-Kuhn-Tucker} conditions, which means that the solution of \eq{eq:optimal-rho(n)} is a global minimum~\cite{practical-optimization-book}.
\end{proof}

\end{document}